\newcommand{\st}{\ensuremath{\ | \ }}
\newcommand{\card}[1]{\left|#1\right|}
\renewcommand{\vec}[1]{\text{\mathversion{bold}$#1$}}
\newcommand{\Fq}{\mathbb{F}_q}
\newcommand{\Fqk}{\mathbb{F}_{q^k}}
\def\GL#1{\mathrm{GL}_{#1}}
\def\T{\mathbf T}
\def\H{\mathbf H}
\def\N{\mathbf N}
\newcommand{\rsp}{\mathrm{rowsp}}
\newcommand{\stab}{\mathrm{Stab}}
\newcommand{\orb}{\mathrm{Orb}}
\newcommand{\mx}{\mathrm{x}}
\newcommand{\cU}{\mathcal{U}}
\newcommand{\cV}{\mathcal{V}}
\newcommand{\cC}{\mathcal{C}}
\newcommand{\cG}{\mathcal{G}}
\theoremstyle{thmstyleone}%
\newtheorem{theorem}{Theorem}%
\newtheorem{lemma}{Lemma}%
\theoremstyle{thmstyletwo}%
\newtheorem{remark}{Remark}%
\theoremstyle{thmstylethree}%
\begin{document}


\title[Spread Codes from Abelian
non-cyclic groups]{Spread Codes from Abelian
non-cyclic groups}

\author[1,2]{\fnm{Joan-Josep} \sur{Climent}}\email{jcliment@ua.es}
\equalcont{These authors contributed equally to this work.}

\author*[1,3]{\fnm{Ver\'{o}nica} \sur{Requena}}\email{vrequena@ua.es}

\author[1,4]{\fnm{Xaro} \sur{Soler-Escriv\`{a}}}\email{xaro.soler@ua.es}
\equalcont{These authors contributed equally to this work.}

\affil*[1]{\orgdiv{Departament de Matem\`{a}tiques}, \orgname{Universitat d'Alacant}, \orgaddress{\street{Carretera de Sant Vicent del Raspeig s/n}, \city{Sant Vicent del Raspeig}, \postcode{E-03690}, \country{Spain}}}
\affil[2]{orcidID: 0000-0003-0522-0304}
\affil[3]{orcidID: 0000-0002-1497-6456}
\affil[4]{orcidID: 0000-0001-7595-7032}


\abstract{Given the finite field $\mathbb{F}_{q}$, for a prime power $q$, in this paper we present a way of constructing spreads of $\mathbb{F}_{q}^{n}$.
They will arise as orbits under the action of an Abelian non-cyclic group. 
First, we construct a family of orbit codes of maximum distance using this group, and then we complete each of these codes to achieve a spread of the whole space
having an orbital structure.}


\keywords{Network coding; subspace codes; Grassmannian; spreads; group action; general linear group}


\pacs[MSC Classification]{11T71, 94B60, 20H30, 05E20, 68P30}


\maketitle


\section{Introduction}

\emph{Network coding} is a part of information theory that describes a method to maximize the rate of a network which is modelled by a directed acyclic multigraph, with one or multiple sources and multiple receivers. 
First introduced in \cite{Ahlswede2000}, the key point of this method is allowing the intermediate nodes of the network to transmit linear combinations of the inputs they receive. 
The algebraic approach given by K\"otter and Kschischang in \cite{Koetter2008} provided a rigorous mathematical setup for error correction when coding in non-coherent networks and, as a result, this theory was able to advance vastly. 
In this setting, the transmitted messages (\emph{codewords}) are vector subspaces of a given vector space $\mathbb{F}^n_q$, where $\mathbb{F}_q$ is the finite field of $q$ elements and a \emph{subspace code} is just a collection $\mathcal{C}$ of vector subspaces of $\mathbb{F}_{q}^{n}$. 
When all subspaces of $\mathcal{C}$ have the same dimension, we say that $\mathcal{C}$ is a \emph{constant dimension code}.
The minimum distance $d(\mathcal{C})$ of $\mathcal{C}$ is computed in the usual way by using a metric called \emph{subspace distance} in the set of all subspaces of $\mathbb{F}_q^n$.
We refer the reader to \cite{Bassoli2013, Trautmann2018} and references therein for further information regarding network coding and subspace codes.

One of the most important and studied families of subspace codes are \emph{spread codes} (or simply \emph{spreads}). 
A spread code is a constant dimension code such that all its elements intersect pairwise trivially and their union covers the whole vector space. 
Spreads are clearly a relevant family of constant dimension codes 
since they reach the maximum distance and, at the same time, the maximum size for that distance. 
For this reason, many papers in the literature about subspace codes are devoted to the study and construction of this type of codes (see \cite{Gluesing-Luerssen2019, Gorla2012, Honold2018, Manganiello2008} for instance). 

A relevant way of constructing constant dimension codes is by considering the natural action of the general linear group, $\GL{n}(\Fq)$, on the Grassmannian $\cG_q(k, n)$, which is the set of all $k$-dimensional vector subspaces of $\mathbb{F}_{q}^{n}$ (for an integer $k\in \{1,\dots, n\}$). 
Using this technique, the codes arise as orbits under the action of some specific subgroup of $\GL{n}(\Fq)$. 
Constant dimension codes constructed in this way are called \emph{orbit codes}. 
In the linear network coding setting, they were first introduced in \cite{Trautmann2010}, where  their main properties are given. 
Due to the group action point of view, orbit codes have a nice mathematical structure, and they have been investigated by many different authors since then (see for instance \cite{Bardestani2015,Gluesing-Luerssen2015,Manganiello2011,Rosenthal2013,Trautmann2014}).
Using powerful tools from group theory, the distance of this kind of codes can be calculated in a simpler way, and we can compute the size of the code in terms of the order of the acting group and the order of the corresponding stabilizer subgroup \cite{Trautmann2010}. 
In addition, there exist different algorithms for decoding orbit codes \cite{Poroch2019, Trautmann2014} and several of the known algebraic constructions of constant dimension codes can be seen as orbit codes.
Most of the research on this topic focus on the use of cyclic subgroups of $\GL{n}(\Fq)$, in which case we speak about \emph{cyclic orbit codes}. 
In particular, in \cite{Trautmann2010} appears the first construction of spreads with an orbital structure. 
From here, we can find several works on spreads with an orbital structure provided by a cyclic group (for instance, \cite{Chen2020b,Trautmann2013b}). 

While research on cyclic orbit codes abounds, the same is not true when we want to focus on other types of subgroups of the general linear group. 
As a first step, in \cite{Climent2019b} we approach the study of orbit codes through the action of Abelian non-cyclic subgroups of $\GL{n}(\Fq)$, giving a specific construction of maximum distance.
Pursuing this line of research, the papers \cite{Chen2021b,TerraBastos2020} are also concerned with Abelian non-cyclic orbit codes. 
Nevertheless, as far as we know, the only construction on spreads through the action of a non-cyclic Abelian group is given in \cite{Chen2021b}. 
In this paper, the authors construct an Abelian non-cyclic orbit code of $\mathbb{F}^{2k}_q$ of dimension $k$ having maximum distance and then obtain a $k$-spread of $\mathbb{F}^{2k}_q$ just by adding two $k$-subspaces of $\mathbb{F}^{2k}_q$. 

Our main objective in this paper is to pursue the research of orbit codes constructed by using non-cyclic Abelian groups. 
Specifically, we generalize the results obtained in \cite{Chen2021b} in the following sense: For an even integer $n$ and $k$ a divisor of $n$, we firstly construct an Abelian non-cyclic orbit code of $\mathbb{F}^{n}_q$ of dimension $k$ having maximum distance. 
Then, we achieve to complete this orbit code with a nice family of $k$-subspaces of $\mathbb{F}^{n}_q$ in such a way the resulting code is a $k$-spread of $\mathbb{F}^{n}_q$ with an orbital structure which is not cyclic. 
This generalization is not immediate and has required the use of new techniques, not used in \cite{Chen2021b}.

The paper is structured as follows. 
In Section~\ref{sec:prelim}, we collect all the background on finite fields and subspace codes that will be needed in the subsequent sections. 
Section~\ref{sec:const} is devoted to our orbital constructions of maximum distance codes and is divided into two parts. 
Firstly, we construct a non-cyclic Abelian group $\mathbf{H}$ and from it a maximum distance family of orbit codes of dimension $k$ in $\mathbb{F}^{n}_q$. 
Secondly, we tackle the construction of $k$-spreads in $\mathbb{F}^{n}_q$ by completing each orbit code previously obtained. To do so, we define two different families of codes, one of them of orbital type. The final $k$-spread is then presented as a disjoint union of these three types of codes.  
Finally, Section~\ref{sec:concl} concludes the paper.

\section{Preliminaries}
\label{sec:prelim}

\subsection{Finite fields}\label{subsec:cossos}

In this section, we recall some basic notions and results concerning finite fields that will be needed later.
The reader can find more details in any basic book on this subject, e.g.~\cite{Lidl1986bk}.

Given a finite field $\Fq$ and a positive integer $n$, we will denote by $\Fq^{n \times n}$ the set of all $n \times n$ matrices with entries in $\Fq$ and by $\GL{n}(\Fq)$ the \emph{general linear group} of degree $n$ over $\Fq$. 

Assume that $n=ks$, for some positive integers $k, s$ and consider a \emph{primitive element} $\alpha$ of the field $\Fqk$, that is, a generator of the cyclic group $\Fqk^*$.
The \emph{companion matrix} of the minimal polynomial $p(\mx)=a_0+a_1\mx+\cdots+a_{k-1}\mx^{k-1}+\mx^k$ of $\alpha$ over $\Fq$ is the matrix
\[
  M_k
  = 
  \begin{pmatrix}
    0      & 1       &  0      & \cdots &  0       \\
    0      & 0       &  1      & \cdots &  0       \\
    \vdots &  \vdots &  \vdots & \ddots &  \vdots  \\
    0      & 0       &  0      & \cdots &  1       \\
   -a_0    & -a_1    & -a_2    & \cdots & -a_{k-1}
  \end{pmatrix}
  \in 
  \GL{k}(\Fq).
\]
It turns out that $p(\mx)$ is the characteristic polynomial of $M_k$ and, then, $\Fqk=\Fq[\alpha]$ can be realized as a set of matrices of $\Fq^{k\times k}$ through the following field isomorphism
\begin{equation}\label{eq:def_phi}
\begin{array}{cccc}
  \phi : & \Fq[\alpha]                 &  \longrightarrow & \Fq\left[ M_k \right] \\
         & \sum_{i=0}^{k-1}b_i\alpha^i & \mapsto      & \sum_{i=0}^{k-1}b_iM_k^i. 
\end{array}
\end{equation}
Therefore, $M_k$ can be seen as a primitive element of the finite field $ \Fq\left[ M_k \right]$. 
Equivalently, the multiplicative order of $M_k$, denoted by $o(M_k)$, is $q^k-1$ and
\[
  \Fq\left[ M_k \right]
  =
  \{0_{k\times k}\}\cup\langle M_k\rangle,
\]
where $0_{k\times k}$ is the null matrix of $\Fq^{k\times k}$ and
\[
  \langle M_k\rangle
  =
  \big\{I_k, M_k, M_k^2, \dots, M_k^{q^k-2}\big\}
\]
is the multiplicative group generated by $M_k$.

On the one hand, the isomorphism $\phi$ provided in expression \eqref{eq:def_phi} allows to map vector subspaces of $\Fqk^s$ into vector subspaces of $\Fq^n$ (this is the well-known \emph{field reduction} technique, see \cite{Lavrauw2015} for instance). 
Specifically, each line generated by a vector $(u_1,\dots, u_s)$ of $\Fqk^s$, will produce a subspace of dimension $k$ of $\Fq^n$ with the following injective map
\begin{equation}\label{eq:def_varphi}
\begin{array}{cccc}
\varphi: & \mathcal{G}_{q^k}(1,s)                &  \longrightarrow & \mathcal{G}_{q}(k,n) \\
         &\rsp \begin{pmatrix}
                 u_{1}    &  \dots & u_{s}
               \end{pmatrix}       
                                              & \mapsto          & \rsp \begin{pmatrix}
                                                                          \begin{array}{c|c|c}
                                                                            \phi(u_{1}) &   \dots & \phi(u_{s})
                                                                          \end{array}
                                                                        \end{pmatrix}.
\end{array}
\end{equation}

On the other hand, $\phi$ is also useful to embed $\GL{s}(\Fqk)$ into $\GL{n}(\Fq)$. 
By using it, we obtain the following group monomorphism
\begin{equation}\label{eq:def_psi}
\begin{array}{cccc}
  \psi : & \GL{s}(\Fqk)                &  \longrightarrow & \GL{n}(\Fq) \\
         & \begin{pmatrix}
             a_{11} & \cdots & a_{1s} \\
             \vdots & \ddots & \vdots \\
             a_{s1} & \cdots & a_{ss}
           \end{pmatrix}       
                                       & \mapsto          & \begin{pmatrix}
                                                              \begin{array}{c|c|c}
                                                                \phi(a_{11}) & \cdots & \phi(a_{1s}) \\ \hline
                                                                \vdots       & \ddots & \vdots       \\ \hline
                                                                \phi(a_{s1}) & \cdots & \phi(a_{ss})
                                                              \end{array}
                                                            \end{pmatrix}.
\end{array}
\end{equation}

\subsection{Subspace codes}
\label{sec:subspace codes}

We present here the main results of subspace codes that we need to understand this work.

The Grassmannian $\cG_q(k, n)$ can be seen as a metric space endowed with the following \emph{subspace distance} (see \cite{Koetter2008})
\begin{equation*}
  d_S(\cU,\cV)
  =
  \dim(\cU+\cV)-\dim(\cU\cap \cV)
  =
  2(k -\dim(\cU\cap \cV)),
\end{equation*}
for all $\cU, \cV\in\cG_q(k, n)$. 

If $\cC$ is a non-empty subset of $\cG_q(k, n)$, then we say that $\cC$ is a \emph{constant dimension code} and its \emph{minimum distance} is defined as 
\begin{equation}\label{eq:distC}
  d_S(\cC)
  =
  \min
  \left\{ 
    d_S(\cU, \cV) 
    \ | \ 
    \cU, \cV\in \cC, \ \cU\neq \cV
  \right\}
  \leq 
  \begin{cases}
    2k,     & \text{if $2k\leq n$}, \\
    2(n-k), & \text{if $2k\geq n$}.
  \end{cases}
\end{equation}
If $|\cC|=1$, we put $d_S(\cC)=0$; in any other case, $d_S(\cC)>0$. When the upper bound in expression~\eqref{eq:distC} is attained, we say that $\cC$ is a \emph{constant dimension code of maximum distance}. 
Note that $d_S(\cC) = 2k$ only if $2k \leq n$ and all the codewords in $\cC$ intersect trivially.
This class of codes of maximum distance are known as \emph{partial spread codes}, and they were introduced in \cite{Gorla2014} as a generalization of the class of \emph{spread codes}, previously studied in \cite{Manganiello2008}. 
The code $\cC$ will be a spread code, if the subspaces in $\cC$ pairwise intersect trivially, and they cover the whole space $\Fq^n$ (see \cite{Hirschfeld1998bk}). 
These codes only occur in the case where $k \mid n$ and have cardinality $\frac{q^n-1}{q^k-1}$. 
In this way, it can be said that spread codes are partial spreads of maximum size, since the size of a partial spread code of dimension $k$ (or $k$-partial spread code) is always upper bounded by 
\begin{equation}\label{eq:cota_partial_spread}
  \frac{q^n-q^m}{q^k-1},
\end{equation}
where $m$ is the reminder obtained dividing $n$ by $k$ (see \cite{Gorla2014}). 
Notice that any code of lines $\cC\subseteq \mathcal{G}_q(1,n) $ with $|\cC|\geq 2$ is in particular a partial spread code with size $|\cC| \leq \frac{q^n-1}{q-1}$, whereas $\mathcal{G}_q(1,n)$ can be seen as the spread of lines of $\Fq^n$.
These two types of codes will be useful in the following sections.

In case that we have $\cC\subseteq \cG_q(k, n)$ with $2k\geq n$, then we can consider the \emph{dual code} of $\cC$, that is, the set $\cC^\perp=\{\mathcal{V}^\perp\ |\ \mathcal{V}\in \cC\}$, which is a constant dimension code of dimension $n-k$ with  the same cardinality and distance as $\cC$ (see \cite{Koetter2008}). 
In particular, if $d_S(\cC)=2(n-k)$, then $\cC^\perp$ is an $(n-k)$-partial spread code and the size of $\cC$ can be also upper bounded in terms of expression \eqref{eq:cota_partial_spread}. 
This is the reason why from now on we consider that $2k \leq n$.

An important class of constant dimension codes are those called 
\emph{orbit codes}, introduced in \cite{Trautmann2010}.
These codes are defined as orbits under the action of some subgroup of the general linear group. 
Consider $V\in \Fq^{k\times n}$ a full-rank matrix generating a subspace $\cV=\rsp(V)\in\cG_q(k, n)$. 
The map 
\[
\begin{array}{ccc}
  \mathcal{G}_q(k,n)  \times \GL{n}(\Fq) & \longrightarrow & \mathcal{G}_q(k,n) \\
  (\cV, A)                               & \mapsto         & \cV\cdot A=\rsp(VA),
\end{array}
\]
defines a group action from the right of the general linear group on $\cG_q(k, n)$ (see \cite{Trautmann2010}) since it is independent of the choice of the generating matrix $V$.
Given a subgroup $\H$  of $\GL{n}(\Fq)$, the \emph{orbit code} $\orb_{\H}(\cV)$ is the orbit generated by the action of $\H$ on $\cV$, that is,
\[
  \orb_{\H}(\cV)
  =
  \{ \cV\cdot A\ |\ A\in\H \}
  \subseteq
  \mathcal{G}_q(k,n).
\]
The size of this orbit code is $|\orb_{\H}(\cV)|=\frac{|\H|}{|\stab_{\H}(\cV)|}$, where $\stab_{\H}(\cV)=\{A\in\H\ | \ \cV\cdot A=\cV\}$ is the stabilizer subgroup of the subspace $\cV$ under the action of $\H$. 
If $\H=\stab_{\H}(\cV)$, then $\orb_{\H}(\cV)=\{\cV\}$ and $d_S(\orb_{\H}(\cV))=0$.
In any other case, the minimum distance of the orbit code $\orb_{\H}(\cV)$ can be calculated as (see \cite{Trautmann2010}) 
\[
  d_S(\orb_{\H}(\cV))
  =
  \min \{ d_S(\cV, \cV\cdot A) \ | \  A\in\H\setminus \stab_{\H}(\cV) \}.
\] 

Recall that we are assuming $n=ks$ and consider the field reduction map $\varphi$ defined in expression \eqref{eq:def_varphi} which maps lines of $\Fqk^s$ into vector subspaces of dimension $k$ of $\Fq^n$.
Since $\varphi$ is injective, it preserves intersections and, therefore, it follows that $d_{S}(\varphi(\cC))=kd_{S}(\cC)=2k$, for any $\cC \subseteq \cG_{q^k}(1,s)$. 
In other words, $\varphi(\cC)$ is a $k$-partial spread of $\Fq^{n}$, for any code of lines $\cC$ of $\Fqk^s$.
In particular, if we consider the spread of all lines of $\Fqk^{s}$, it turns out that 
\begin{equation}\label{def:spreadS}
  \varphi(\cG_{q^k}(1,s)) \subseteq \cG_{q}(k,n),
\end{equation}
is a $k$-spread of $\Fq^n$, which is called the \emph{Desarguesian} $k$-spread of $\Fq^n$ (see \cite{Lavrauw2015}). 
Originally due to Segre (see \cite{Segre1964}), in the network coding setting, this construction appears for the first time in \cite{Manganiello2008}.

Notice that the field reduction map $\varphi$, together with the group monomorphism $\psi$ defined in expression~\eqref{eq:def_psi}, allow us to establish a relation between the group action of $\GL{s}(\Fqk)$ on $\cG_{q^k}(1,s)$ and the group action of $\GL{n}(\Fq)$ on $\cG_q(k,n)$ as follows (see \cite{AlonsoGonzalez2021,Navarro-Perez2022})
\begin{equation} \label{eq:equiv_accions}
  \varphi(\cV\cdot A)
  =
  \varphi(\cV)\cdot \psi(A),
\end{equation}
for all $\cV\in\cG_{q^k}(1,s)$ and $A\in \GL{s}(\Fqk)$.
In particular, when we consider a subgroup $\H\leq \GL{s}(\Fqk)$ and an orbit code $\orb_{\H}(\cV)$, for some $\cV\in\cG_{q^k}(1,s)$, then 
\begin{align} \label{eq:S_orbSinger}
  \varphi(\orb_{\H}(\cV))
    & = \{\varphi(\cV\cdot A)\ |\ A\in\H \} \nonumber \\
    & = \{\varphi(\cV) \cdot \psi(A) \ |\ \psi(A)\in \psi(\H)\} \nonumber \\
    & = \orb_{\psi(\H)}(\varphi(\cV) ) \subseteq \cG_{q}(k,n).
\end{align}

\section{Our construction}
\label{sec:const}

In the rest of the paper, we consider $n,k,s,t$ positive integers such that $n=ks$, with $s=2t$ and $\gcd(t,q^k-1)=1$.

The main purpose of this section is to show how we can construct the Desarguesian $k$-spread code of $\mathbb{F}^{n}_q$ given in expression~\eqref{def:spreadS} starting from the action of a specific Abelian non-cyclic subgroup $\bar{\H}$ of $\GL{n}(\Fq)$. 
To do so, in Subsection \ref{sec:3.1} we will work first with an appropriate subgroup $\H$ of $\GL{s}(\Fqk)$ and its action on lines of $\Fqk^s$ and then we will use the field reduction technique explained in Subsection \ref{sec:subspace codes} to build our subgroup $\bar{\H}$ of $\GL{n}(\Fq)$. 
Then we will construct a $k$-partial spread of $\mathbb{F}^{n}_q$ as an orbit generated by the action of $\bar{\H}$. 
Following this, in Subsection \ref{sec:3.2} we will explain how to achieve the whole $k$-spread.

\subsection{A $k$-partial spread with an orbital structure}
\label{sec:3.1}

In this subsection we provide an explicit description of a $k$-partial spread of $\mathbb{F}^{n}_q$ with an orbital structure. 
For this purpose, firstly, we will construct an Abelian non-cyclic subgroup $\H$ of $ \GL{s}(\Fqk)$ with a suitable action on certain lines of $\Fqk^s$. 

Let $M_t\in \GL{t}(\Fqk)$ be the companion matrix of a primitive polynomial of degree $t$ over $\mathbb{F}_{q^{k}}$. 
As we explain in Subsection~\ref{subsec:cossos}, we have that $\mathbb{F}_{q^{kt}}\cong \mathbb{F}_{q^k}[M_t]$ and, therefore, the order of $M_t$ is $o(M_t)=q^{kt}-1$. 
Consider $C:=M_t^{q^k-1}$ whose multiplicative order, clearly, is $r:=\frac{q^{kt}-1}{q^k-1}$; and, let $\alpha\in \Fqk$ be a primitive element, whose multiplicative order is $q^{k}-1$.

We construct the matrices $h_{1}, h_{2} \in \GL{s}(\Fqk)$ given by
\begin{equation}\label{eq:matriushi}
h_1:= \begin{pmatrix}
        C    &  I_t\\
        0_{t\times t}    &   \alpha I_t
    \end{pmatrix}
    \quad \text{and} \quad
h_2:= \begin{pmatrix}
         \alpha I_{t}    &  -I_{t}\\
         0_{t\times t}    &  C
      \end{pmatrix}.
\end{equation}

The following result will be useful in order to compute the multiplicative order of $h_1$ and $h_2$. 

\begin{lemma} \label{lemma:1}
For any positive integer $\ell$, one has that
\[
  \gcd(\ell,q-1)=1 
  \ \text{if and only if} \ 
  \gcd \left(\frac{q^{\ell}-1}{q-1},q-1\right)=1.
\]
\end{lemma}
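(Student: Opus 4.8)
The plan is to show something slightly stronger than the stated equivalence, namely that the two greatest common divisors appearing in the lemma are in fact \emph{equal} for every $\ell$; the biconditional is then an immediate consequence.

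First I would rewrite the quotient as a geometric sum. Since $q\geq 2$, the identity
\[
  \frac{q^{\ell}-1}{q-1}
  =
  1+q+q^2+\cdots+q^{\ell-1}
  =
  \sum_{i=0}^{\ell-1} q^i
\]
holds. The key observation is that $q\equiv 1 \pmod{q-1}$, and hence $q^i\equiv 1 \pmod{q-1}$ for every $i\geq 0$. Summing these $\ell$ congruences yields
\[
  \frac{q^{\ell}-1}{q-1}
  =
  \sum_{i=0}^{\ell-1} q^i
  \equiv
  \ell \pmod{q-1}.
\]

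From here I would invoke the elementary fact that $\gcd(a,m)=\gcd(b,m)$ whenever $a\equiv b \pmod{m}$. Applying it with $a=\frac{q^{\ell}-1}{q-1}$, $b=\ell$ and $m=q-1$ gives
\[
  \gcd\left(\frac{q^{\ell}-1}{q-1},\,q-1\right)
  =
  \gcd(\ell,\,q-1).
\]
In particular the left-hand side equals $1$ if and only if the right-hand side equals $1$, which is precisely the asserted equivalence.

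I do not anticipate any real obstacle: the whole argument reduces to a one-line congruence computation followed by a standard property of the gcd. The only step deserving explicit mention is the congruence $q^i\equiv 1 \pmod{q-1}$, on which everything rests; once that is recorded, the equivalence (indeed the stronger equality of gcds) follows at once.
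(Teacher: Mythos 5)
Your proof is correct and rests on the same key computation as the paper's: the congruence $\frac{q^{\ell}-1}{q-1}=1+q+\cdots+q^{\ell-1}\equiv \ell$. The only difference is cosmetic --- the paper runs the argument modulo each prime $p$ dividing $q-1$ and compares divisibility, whereas you work modulo $q-1$ itself and invoke $\gcd(a,m)=\gcd(b,m)$ for $a\equiv b\pmod{m}$, which incidentally yields the slightly stronger conclusion that the two gcds are equal.
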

\begin{proof}
First we notice that if $p$ is a prime dividing $q-1$, then $q \equiv 1 \pmod{p}$ and so $q^{i} \equiv 1 \pmod{p}$ for $i \geq 1$. 
Therefore,
\[
  \frac{q^\ell-1}{q-1}
  =
  q^{\ell-1}+q^{\ell-2}+ \cdots + q+1
  \equiv
  \ell \pmod{p}.
\]
Then, it is clear that 
$p \mid \ell$ if, and only if, $p \mid \frac{q^{\ell}-1}{q-1}$
and the result follows. 
\end{proof}

Note that, since we are assuming that $\gcd(t,q^k-1)=1$, Lemma~\ref{lemma:1} states that $\gcd(r,q^k-1)=1$.

The following result is a generalization of Lemmas~3.1, 3.2 and 3.3 of \cite{Chen2021b}. 
The proof runs analogously, and thus we omitted it. 

\bigskip

\begin{lemma} 
\label{th1}
Consider the matrices $h_1, h_2$ defined in expression \eqref{eq:matriushi}.
The following statements are satisfied:
\begin{enumerate}
\item The multiplicative order of $h_1$ and $h_2$ is $q^{kt}-1$.
\item The matrices $h_1$ and $h_2$ commute.
\item $\langle h_1\rangle \cap \langle h_2\rangle=\{I_{s}\}$.
\end{enumerate}
\end{lemma}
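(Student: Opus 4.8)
The plan is to exploit the block upper triangular form of both matrices. Writing $h_1 = \begin{pmatrix} C & I_t \\ 0_{t\times t} & \alpha I_t\end{pmatrix}$ and using that $C$ commutes with the scalar block $\alpha I_t$, I would first establish by induction the closed form
\[
h_1^m = \begin{pmatrix} C^m & \sum_{i=0}^{m-1}\alpha^{m-1-i}C^i \\ 0_{t\times t} & \alpha^m I_t\end{pmatrix},
\]
together with the analogous formula for $h_2$, whose diagonal blocks are $\alpha I_t, C$ and whose off-diagonal block is $-\sum_{i=0}^{m-1}\alpha^{m-1-i}C^i$. The single preliminary fact needed throughout is that $C-\alpha I_t$ is invertible: since $C=M_t^{q^k-1}$, the eigenvalues of $C$ are the $(q^k-1)$-th powers of the roots of a primitive polynomial of degree $t$, hence elements of order $r=\frac{q^{kt}-1}{q^k-1}$ in $\mathbb{F}_{q^{kt}}^*$, whereas $\alpha$ has order $q^k-1$; as $\gcd(r,q^k-1)=1$ by Lemma~\ref{lemma:1}, $\alpha$ cannot be an eigenvalue of $C$. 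Setting $B=\alpha^{-1}C$, the geometric-sum identity then rewrites the off-diagonal block of $h_1^m$ as $\alpha^{m-1}(B^m-I_t)(B-I_t)^{-1}$.

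For part (a), the closed form shows that $h_1^m=I_s$ forces $\alpha^m=1$ and $C^m=I_t$; conversely, if $\alpha^m=1$ and $C^m=I_t$, then $B^m=\alpha^{-m}C^m=I_t$, so the off-diagonal block automatically vanishes. Hence $h_1^m=I_s$ if and only if both $(q^k-1)\mid m$ and $r\mid m$, that is, if and only if $\lcm(q^k-1,r)\mid m$. Because $\gcd(r,q^k-1)=1$, this least common multiple equals $r(q^k-1)=q^{kt}-1$, giving $o(h_1)=q^{kt}-1$; the identical argument applies to $h_2$.

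Part (b) is a direct block multiplication: both $h_1h_2$ and $h_2h_1$ collapse to $\alpha\begin{pmatrix} C & 0_{t\times t} \\ 0_{t\times t} & C\end{pmatrix}$, the cross terms $-C+C$ and $\alpha-\alpha$ cancelling in each case.

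For part (c), I would take $h_1^a=h_2^b$ and compare blocks. Equality of the diagonal blocks yields $C^a=\alpha^b I_t$ and $C^b=\alpha^a I_t$, so both $C^a$ and $C^b$ are scalar matrices. The crux is a purely field-theoretic observation: under the isomorphism $\mathbb{F}_{q^{kt}}\cong\mathbb{F}_{q^k}[M_t]$, the scalar matrices $\lambda I_t$ correspond exactly to the subfield $\mathbb{F}_{q^k}$, while $C$ corresponds to an element of order $r$; since the order of any power of that element divides $r$ and the group $\mathbb{F}_{q^k}^*$ has order $q^k-1$ coprime to $r$, the only scalar power of $C$ is $I_t$. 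Thus $C^a=C^b=I_t$, whence $\alpha^a=\alpha^b=1$, and by the criterion established in part (a) this forces $h_1^a=h_2^b=I_s$. I expect this scalar-power step to be the main obstacle, since it is precisely where the hypothesis $\gcd(t,q^k-1)=1$ (equivalently $\gcd(r,q^k-1)=1$) does the essential work; everything else is either routine matrix bookkeeping or an immediate consequence of the closed forms.
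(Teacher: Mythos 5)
Your proof is correct, and it follows essentially the approach the paper intends: the paper omits this proof (deferring to Lemmas 3.1--3.3 of \cite{Chen2021b}), but your closed forms for the powers of $h_1$ and $h_2$ are exactly the specialization of the paper's expression \eqref{eq:ab}, and your geometric-series argument with $\alpha^{-1}C$ together with the coprimality $\gcd(r,q^k-1)=1$ is the same device the authors use in the proof of Lemma~\ref{lem:ab}. The only cosmetic difference is that you invoke eigenvalues to see that $C-\alpha I_t$ is invertible, whereas it suffices to note that $C-\alpha I_t$ is a nonzero element of the field $\Fqk[M_t]$ (nonzero because $C=\alpha I_t$ would force $r=o(C)=o(\alpha)=q^k-1$, contradicting $\gcd(r,q^k-1)=1$), which is precisely the paper's argument.
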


From the previous lemma, it follows that the group  generated by the matrices $h_1$ and $h_2$ is an Abelian non-cyclic subgroup of $\GL{s}(\Fqk)$. 
Let us denote it us $\H:=\langle h_1, h_2\rangle=\langle h_1\rangle \langle h_2\rangle$. 
Thus, the order of $\H$ is $(q^{kt}-1)^2$ and we can express its elements as,
\[
  \H
  =
  \langle h_{1} \rangle \langle h_{2} \rangle
  =
  \left\{h_1^a h_2^b \st 1\leq a, b \leq q^{kt}-1 \right\}.
\]
Moreover, any arbitrary element of $\H$ has the following matrix expression
\begin{equation} \label{eq:ab}
  h_1^a h_2^b
  =
  \begin{pmatrix}
    \alpha^bC^a   & D_{a,b} \\
    0_{t\times t} & \alpha^aC^b
  \end{pmatrix}
\end{equation}
where 
\begin{equation}\label{eq:D_ab}
  D_{a,b}
  :=
  \sum_{j=1}^a \alpha^{j-1}C^{a+b-j}
  -
  \sum_{j=1}^b \alpha^{j-1}C^{a+b-j}
\end{equation}
and 
$a, b\in\{1,\dots, q^{kt}-1\}$. 
We will denote by $(h_1^a h_2^b)_i$ the $i$-th row of the matrix $h_1^a h_2^b\in \H$, for any $i\in\{1,\dots, s\}$.

Let us denote $\vec{e}_{i}\in\Fqk^s$ the $i$-th canonical vector, for any $i\in\{1,\dots, s\}$. 
We are interested in the action of the group $\H$ on the lines of $\Fqk$ generated by $\vec{e}_i$, for $i\in\{1,\dots, t\}$. 
For this reason, we analyse the corresponding stabilizer subgroup. 
First, we need the following technical lemma.

\begin{lemma} \label{lem:ab}
Consider the matrix $D_{a,b}\in \Fqk[M_t]$ defined in expression~\eqref{eq:D_ab}, for $a,b\in\{1,\dots ,q^{kt}-1$\}, then 
\[
  D_{a,b}
  =
  0_{t\times t} 
  \ \text{if and only if} \
  a=b.
\]
\end{lemma}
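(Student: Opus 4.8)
The plan is to carry out the entire computation inside the field $\mathbb{F}_{q^k}[M_t]\cong\mathbb{F}_{q^{kt}}$, in which $D_{a,b}$, the matrix $C=M_t^{q^k-1}$ and the scalar matrix $\alpha I_t$ are commuting elements, every nonzero element is a unit, and there are no zero divisors. The backward direction is immediate: if $a=b$ the two sums in \eqref{eq:D_ab} are identical, so $D_{a,a}=0_{t\times t}$. For the forward direction I will instead prove that $D_{a,b}=0_{t\times t}$ forces $a=b$, i.e.\ the contrapositive that $a\neq b$ yields $D_{a,b}\neq 0_{t\times t}$, by reducing the vanishing of $D_{a,b}$ to an equality between powers of $C$ and of $\alpha$.

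The key step is a telescoping identity obtained by multiplying $D_{a,b}$ by $C-\alpha I_t$. Since $C$ and $\alpha I_t$ commute, each of the two sums defining $D_{a,b}$ telescopes by the same computation: for $N\in\{a,b\}$ one gets
\[
  (C-\alpha I_t)\sum_{j=1}^{N}\alpha^{j-1}C^{a+b-j}
  =
  C^{a+b}-\alpha^{N}C^{a+b-N}.
\]
Subtracting the $N=b$ identity from the $N=a$ one and factoring gives
\[
  (C-\alpha I_t)\,D_{a,b}
  =
  \alpha^{b}C^{a}-\alpha^{a}C^{b}
  =
  \alpha^{b}C^{b}\bigl(C^{a-b}-\alpha^{a-b}\bigr).
\]
I deliberately multiply by $C-\alpha I_t$ rather than divide, so that the argument is unaffected by whether or not $C$ equals $\alpha I_t$. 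Now if $D_{a,b}=0_{t\times t}$, the left-hand side vanishes; since $\alpha^{b}C^{b}$ is a nonzero element of the field $\mathbb{F}_{q^{kt}}$ and the field has no zero divisors, we conclude $C^{a-b}=\alpha^{a-b}$.

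It remains to exploit this equality of powers, which is where the hypotheses enter. Writing $m=a-b$, note that $\lvert m\rvert\leq q^{kt}-2$ because $a,b\in\{1,\dots,q^{kt}-1\}$. The common value $C^{m}=\alpha^{m}$ lies in $\langle C\rangle\cap\langle\alpha\rangle$ inside the cyclic group $\mathbb{F}_{q^{kt}}^{*}$. Here $C$ has order $r=\frac{q^{kt}-1}{q^{k}-1}$ and $\alpha$ has order $q^{k}-1$, and by the remark following Lemma~\ref{lemma:1} the assumption $\gcd(t,q^{k}-1)=1$ gives $\gcd(r,q^{k}-1)=1$. Two cyclic subgroups of a cyclic group whose orders are coprime meet only in the identity, so $C^{m}=\alpha^{m}=1$. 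This forces $r\mid m$ and $(q^{k}-1)\mid m$, hence $\lcm(r,q^{k}-1)=r(q^{k}-1)=q^{kt}-1$ divides $m$; together with $\lvert m\rvert\leq q^{kt}-2$ this yields $m=0$, that is $a=b$. The main obstacle is really spotting the telescoping identity; once it is in place, the conclusion rests on the standard facts about cyclic groups together with the coprimality supplied by Lemma~\ref{lemma:1}.
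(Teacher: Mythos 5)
Your proof is correct. It arrives at the same pivotal identity as the paper --- that $D_{a,b}=0_{t\times t}$ forces $C^{a-b}=\alpha^{a-b}I_t$ --- and closes with essentially the same coprimality argument: $\gcd(r,q^k-1)=1$ (the consequence of Lemma~\ref{lemma:1}) forces $q^{kt}-1=r(q^k-1)$ to divide $a-b$, which is impossible for $a\neq b$ in the given range. Where you genuinely differ is in how you extract that identity from the vanishing of $D_{a,b}$. The paper assumes without loss of generality that $a>b$, collapses $D_{a,b}$ to the single sum $\sum_{j=b+1}^{a}\alpha^{j-1}C^{a+b-j}$, factors out the unit $\alpha^bC^b$, and then sums the resulting geometric series of ratio $\alpha^{-1}C$ --- which obliges it to verify separately that $\alpha^{-1}C\neq I_t$ before inverting $\alpha^{-1}C-I_t$, and to handle the case $a<b$ by symmetry. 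Your telescoping identity $(C-\alpha I_t)D_{a,b}=\alpha^bC^a-\alpha^aC^b=\alpha^bC^b\bigl(C^{a-b}-\alpha^{a-b}I_t\bigr)$ accomplishes the same reduction in one line, is symmetric in $a$ and $b$ (no case split), and, because you only use the trivial implication $D_{a,b}=0\Rightarrow(C-\alpha I_t)D_{a,b}=0$, it never requires knowing whether $C-\alpha I_t$ is invertible. The two computations are of course the same geometric-series manipulation viewed from opposite ends, but your packaging removes two side conditions that the paper's proof must discharge, at no cost to rigor.
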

\begin{proof}
Clearly, if $a=b$ then the matrix $D_{a,b}$ is the null matrix. 
Conversely, assume that $D_{a,b} = 0_{t \times t}$ and let us see that necessarily $a=b$.

Arguing by contradiction, if $a>b$, then we can simplify the matrix $D_{a,b}$ in the following way
\begin{align*}
  0_{t\times t}
  & = D_{a,b} 
    = \sum_{j=b+1}^a \alpha^{j-1}C^{a+b-j}  \\
  & = \alpha^bC^{a-1}+\alpha^{b+1}C^{a-2} + \cdots + \alpha^{a-1}C^{b} \\
  & = \alpha^bC^b \left( C^{a-b-1}+\alpha C^{a-b-2}+ \cdots+\alpha^{a-b-1}I_t \right).
\end{align*}
Notice that  $\alpha^b C^b\in\Fqk[M_t]$ is a regular matrix, since it is non-zero and $\Fqk[M_t]$ is a field. 
Thus, we obtain that 
\begin{align*}
  0_{t\times t} 
  & = C^{a-b-1}+\alpha C^{a-b-2}+ \cdots+\alpha^{a-b-1}I_t \\
  & = \alpha^{a-b-1}( (\alpha^{-1}C)^{a-b-1} + (\alpha^{-1}C)^{a-b-2}+\cdots + (\alpha^{-1}C) + I_t).
\end{align*}
As a consequence, the right factor of this last product must be zero. 
But this is a geometric series of ratio $\alpha^{-1}C$. 
Let us see that $\alpha^{-1}C\neq I_t$. 
If $\alpha^{-1}C=I_t$, then $C = \alpha I_t$ and $r=o(C)=o(\alpha)=q^k-1$.
But this is a contradiction with the fact that $\gcd(q^k-1,r)=1$. 
Thus, we can obtain the sum of this geometric series and write
\begin{align*}
  0_{t\times t}
  & = (\alpha^{-1}C)^{a-b-1}+ (\alpha^{-1}C)^{a-b-2}+\cdots + (\alpha^{-1}C) + I_t \\
  & = \left( (\alpha^{-1}C)^{a-b}-I\right)\left( \alpha^{-1}C-I\right)^{-1}.
\end{align*}
Therefore, we obtain that $C^{a-b}=\alpha^{a-b}I_t$.
It means that
\[
  \frac{o(C)}{\gcd\left(o(C),a-b\right)}
  =
  \frac{o(\alpha)}{\gcd\left(o(\alpha),a-b\right)},
\]
that is, 
\[
  \frac{r}{\gcd\left(r,a-b\right)}
  =
  \frac{q^k-1}{\gcd\left(q^k-1,a-b\right)}
  =
  1
\]
since $\gcd(q^k-1,r)=1$. 
Thus, both $r$ and $q^k-1$ must divide $a-b$. 
And then $q^{kt}-1=r(q^k-1)$ divides $a-b>0$. 
But this is not possible, since $a,b\in\{1,\dots, q^{kt}-1\}$.
 
If $a<b$ then we have an analogous case to the previous one. 
\end{proof}

As a consequence of this result, we are able to obtain the stabilizer subgroup of $\H$ corresponding to the lines generated by $\vec{e}_i$, for $i\in\{1,\dots, t\}$.

\bigskip

\begin{theorem}\label{lem:stabH}
For all $i\in\{1,\dots, t\}$, one has that
\[
  \stab_{\H}(\rsp(\vec{e}_{i}))
  =
  \langle (h_1h_2)^r\rangle 
  = 
  \langle \alpha I_s\rangle.
\]
\end{theorem}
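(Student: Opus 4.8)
The plan is to determine, for a fixed $i\in\{1,\dots,t\}$, exactly which elements $h_1^ah_2^b\in\H$ fix the line $\rsp(\vec{e}_i)$. Since the action sends $\rsp(\vec{e}_i)$ to $\rsp\big(\vec{e}_i(h_1^ah_2^b)\big)$ and $\vec{e}_i(h_1^ah_2^b)$ is precisely the $i$-th row $(h_1^ah_2^b)_i$, the condition is $\vec{e}_i(h_1^ah_2^b)\in\rsp(\vec{e}_i)$. Because $i\leq t$, the block form \eqref{eq:ab} shows that this row is the $i$-th row of $\alpha^bC^a$ (its first $t$ coordinates) followed by the $i$-th row of $D_{a,b}$ (its last $t$ coordinates). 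As $\rsp(\vec{e}_i)$ is supported on the first $t$ coordinates, the stabilizing condition splits into two requirements: the $i$-th row of $D_{a,b}$ must vanish, and the $i$-th row of $\alpha^bC^a$ must be a scalar multiple of $\vec{e}_i$.

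First I would dispose of the off-diagonal block. Since $D_{a,b}\in\Fqk[M_t]$ and $\Fqk[M_t]$ is a field, $D_{a,b}$ is either $0_{t\times t}$ or invertible; an invertible matrix has no zero row, so its $i$-th row vanishes precisely when $D_{a,b}=0_{t\times t}$. By Lemma~\ref{lem:ab} this forces $a=b$, whereupon \eqref{eq:ab} collapses to the block-diagonal matrix $\diag(\alpha^aC^a,\alpha^aC^a)$, whose $i$-th row is the $i$-th row of $\alpha^aC^a$ followed by zeros.

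The key step is to show that, for $B\in\Fqk[M_t]$, the $i$-th row of $B$ is a scalar multiple of $\vec{e}_i$ if and only if $B$ is a scalar matrix. For this, consider the $\Fqk$-linear map $\Fqk[M_t]\to\Fqk^t$, $B\mapsto\vec{e}_iB$. It is injective, since $\vec{e}_iB=0$ with $B\neq 0$ would force the invertible matrix $B$ to annihilate a nonzero vector; as both spaces have $\Fqk$-dimension $t$, it is an isomorphism. The scalar matrices $\beta I_t$ map onto $\Fqk\vec{e}_i$, so by injectivity $\vec{e}_iB\in\Fqk\vec{e}_i$ holds exactly when $B\in\Fqk I_t$. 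Applying this to $B=\alpha^aC^a$ and using that $\alpha^a$ is already a nonzero scalar, the stabilizing condition becomes that $C^a=M_t^{a(q^k-1)}$ is a scalar matrix.

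Finally I would translate scalarity into a divisibility condition. Identifying $M_t$ with a primitive element of $\mathbb{F}_{q^{kt}}\cong\Fqk[M_t]$, the scalar matrices form the copy of $\Fqk^{*}$ inside $\mathbb{F}_{q^{kt}}^{*}$, i.e. the subgroup of order $q^k-1$; hence $M_t^{a(q^k-1)}$ is scalar iff $r\mid a(q^k-1)$, which, since $\gcd(r,q^k-1)=1$, is equivalent to $r\mid a$. Therefore $\stab_{\H}(\rsp(\vec{e}_i))=\{(h_1h_2)^a \st r\mid a\}=\langle(h_1h_2)^r\rangle$, where I use $h_1^ah_2^a=(h_1h_2)^a$ from Lemma~\ref{th1}. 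Computing $(h_1h_2)^r=\diag(\alpha^rC^r,\alpha^rC^r)=\alpha^rI_s$ via $C^r=M_t^{q^{kt}-1}=I_t$, and invoking $\gcd(r,q^k-1)=1$ once more to conclude $\langle\alpha^rI_s\rangle=\langle\alpha I_s\rangle$, completes the proof. I expect the scalar-row characterization of the third paragraph to be the main obstacle, as it is where the field structure of $\Fqk[M_t]$ does the real work; the remaining steps are bookkeeping resting on Lemma~\ref{lem:ab}, Lemma~\ref{lemma:1}, and the $\gcd$ hypothesis.
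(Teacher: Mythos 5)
Your proposal is correct and follows essentially the same route as the paper's proof: reduce the stabilizer condition to the vanishing of the $i$-th row of $D_{a,b}$, invoke Lemma~\ref{lem:ab} to get $a=b$, and then use the field structure of $\Fqk[M_t]$ (a nonzero element has no zero row) to force the diagonal block to be a scalar matrix. The only cosmetic difference is that you arrive at $\langle (h_1h_2)^r\rangle$ first via the divisibility condition $r\mid a$ and then identify it with $\langle\alpha I_s\rangle$, whereas the paper lands on $\langle\alpha I_s\rangle$ directly and matches orders afterwards.
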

\begin{proof}
Let $h_1^ah_2^b \in \stab_{\H}(\rsp(\vec{e}_{i}))$, for some $a,b\in\{1,\dots, q^{kt}-1\}$. 
Then we have that $\rsp(\vec{e}_{i})=\rsp((h_1^ah_2^b)_i)$. 
And this happens exactly when the $i$-th row of the matrix $h_1^ah_2^b$ has the form $\alpha^m \vec{e}_i$, for some $m\in\{1,\dots,q^k-1\}$.
Therefore, since $i\in \{1,\dots , t\}$, considering the general expression of any element of $\H$ given in expression \eqref{eq:ab},
it follows that the $i$-th row of the submatrix $D_{a,b}$ of $h_1^ah_2^b$ is zero. 
Since $D_{a,b}$ is a matrix in the field $\Fqk[M_t]$, it follows that it must be the null matrix. 
As a consequence, from Lemma~\ref{lem:ab}, we can write 
\[
  h_1^ah_2^b
  =
  h_1^ah_2^a
  =
  \begin{pmatrix}
    \alpha^a C^a   &  0_{t\times t} \\
    0_{t\times t}  & \alpha^a C^a
  \end{pmatrix},
  \quad
  \text{for some $a\in\{1,\dots , q^{kt}-1\}$}.
\]
Moreover, the $i$-th row of $h_1^ah_2^a$ has the form $\alpha^m \vec{e}_{i}$ for some $m\in\{1,\dots, q^k-1\}$. 
In particular, the $i$-th row of $\alpha^a C^a-\alpha^mI_t$ is zero and then this is a non-regular matrix of the field $\Fqk[M_t]$. 
Thus, $\alpha^a C^a-\alpha^mI_t=0_{t\times t}$, that is, $\alpha^a C^a=\alpha^mI_t=(\alpha I_t)^m$. 
Then we have
\[
  h_1^ah_2^a
  =
  \begin{pmatrix}
    \alpha^a C^a   & 0_{t\times t} \\
    0_{t\times t}  & \alpha^a C^a
  \end{pmatrix}
  =
  \begin{pmatrix}
    (\alpha I_t)^m   & 0_{t\times t}\\
    0_{t\times t}    & (\alpha I_t)^m
  \end{pmatrix}
  \in 
  \langle \alpha I_{s} \rangle.
\]
Therefore, we obtain the equality
\[
  \stab_{\H}(\rsp(\vec{e}_{i}))
  = 
  \langle \alpha I_s\rangle,
\]
since the other inclusion trivially holds. 
Finally, since $o(C)=r$, one has that $(h_1h_2)^r=\alpha^rI_s$ and then $\langle (h_1h_2)^r\rangle \subseteq \langle \alpha I_s\rangle$. 
Thus, we obtain the equality, since both groups have order $q^k-1$.
\end{proof}

\begin{remark}\label{rem:singer}
A well-known result of Singer (1938) about the action of the subgroup $\langle M_t\rangle\subseteq \GL{t}(\Fqk)$ on the Grassmannian of lines $\cG_{q^k}(1,t)$ states, among other things, that 
\[
  \stab_{\langle M_t\rangle}(\ell)=\langle \alpha I_t\rangle,
\]
for any line $\ell \in \cG_{q^k}(1,t)$ (see \cite[Th. 6.2]{Beth1999bk}). 
This fact will be useful in Subsection \ref{sec:3.2}. 
In Theorem \ref{lem:stabH} we have obtained a similar result for our group $\H$ when it acts on some lines of $(\Fqk)^s$. 
We would like to point out that our group $\H$ is far from verifying such a result for any line of $(\Fqk)^s$. 
Thus, for example, let us note that the subgroup $\langle h_1\rangle$ of $\H$ will be in the stabilizer of lines generated by vectors $\vec{e}_{i}$, when $i\in\{t+1,\dots, s\}$.
\end{remark}

\bigskip

Now we are ready to consider the following $1$-dimensional orbit codes of $\Fqk^s$, for  $i\in\{1,\dots, t\}$
\begin{equation}\label{eq:Ci}
  \mathcal{C}_i
  :=
  \orb_{\H}(\rsp(\vec{e}_{i}))
  =
  \{\rsp((h_1^a h_2^b)_i)\ |\ 1\leq a, b\leq q^{kt}-1 \}\subseteq \cG_{q^k}(1,s).
\end{equation}
From Theorem \ref{lem:stabH}, the size of each orbit code $\cC_i$ will be
\[
  |\mathcal{C}_i|=\frac{|\H|}{|\stab_{\H}(\rsp(\vec{e}_{i}))|}=\frac{(q^{kt}-1)^2}{q^k-1}.
\]

As explained in Subsection \ref{sec:subspace codes}, from the lines $\rsp(\vec{e}_{i})\in \Fqk^s$, for $i\in\{1,\dots, t\}$, and the group $\H\subseteq \GL{s}(\Fqk)$, the injective map $\varphi$ defined in expression~\eqref{eq:def_varphi} and the group monomorphism $\psi$ defined in expression~\eqref{eq:def_psi}, allow us to use the codes $\mathcal{C}_i$ given in expression~\eqref{eq:Ci} in order to construct constant dimension codes of $\cG_{q}(k,n)$. 
Starting from the field isomorphism $\phi$ defined in expression~\eqref{eq:def_phi}, one has that $\phi(0)=0_{k\times k}$ and $\phi(1)=I_{k}$. 
Thus, we consider the vector subspace of $\Fq^n$
\[
  \mathcal{U}_{k,i}:=\varphi(\rsp(\vec{e}_{i}))
  =
  \rsp 
  \begin{pmatrix}
    \begin{array}{c|c|c|c|c}
      0_{k\times k} & \dots & I_{k}& \dots & 0_{k\times k}
    \end{array}
  \end{pmatrix}
  \subseteq 
  \cG_{q}(k,n).
\]
Moreover, we also consider the group $\bar{\H}:=\psi(\H)\subseteq \GL{n}(\Fq)$.
Now, according to expressions \eqref{eq:equiv_accions} and \eqref{eq:S_orbSinger}, for any $i\in\{1,\dots, t\}$, we have that
\begin{align}
  \bar{\mathcal{C}}_i
  &
  :=
  \varphi(\mathcal{C}_i)
  =
  \varphi(\orb_{\H}(\rsp(\vec{e}_{i}))) \nonumber \\
  &
  =
  \orb_{\psi(\H)}(\varphi(\rsp(\vec{e}_{i})))
  =
  \orb_{\bar{\H}}(\mathcal{U}_{k,i}).
  \label{eq:Cibarra}
\end{align}

Since $\varphi$ is an injective map and $d_S (\mathcal{C}_i)=2$, one has that $\bar{\mathcal{C}}_i$ is a $k$-partial spread of  $\Fq^n$, that is, it has dimension $k$ and minimum distance $d_S (\bar{\mathcal{C}}_i)=kd_S (\mathcal{C}_i)=2k$. Moreover, 
\[
|\bar{\mathcal{C}}_i|=|\mathcal{C}_i|=\frac{(q^{kt}-1)^2}{q^k-1}=(q^{kt}-1)r.
\]
Since a $k$-spread of $\Fq^n$ has size $\frac{q^{n}-1}{q^k-1}$ (see Section~2), we can calculate how far each of the $\bar{\mathcal{C}}_i$ codes is from being a $k$-spread, for any $i\in\{1,\dots , t\}$. Specifically, we obtain that 
\[
|\bar{\mathcal{C}}_i| + 2r = (q^{kt}-1)r + 2r = r(q^{kt}+1)= \frac{q^{kt}-1}{q^{k}-1}(q^{kt}+1)=\frac{q^{n}-1}{q^{k}-1}. 
\]

\subsection{Achieving a $k$-spread from each $\bar{\mathcal{C}}_i$}
\label{sec:3.2}

In this section, we explain how we can obtain the Desarguesian $k$-spread of $\Fq^n$ given in expression~\eqref{def:spreadS} starting from each $k$-partial spread $\bar{\mathcal{C}}_i$ defined in expression~\eqref{eq:Cibarra}, for any $i\in \{1,\dots,t\}$. 
That is, fixing a $k$-partial spread $\bar{\mathcal{C}}_i$, we explicitly construct $2r$ subspaces of $\Fq^n$ having dimension $k$ and trivial intersection between them and also with the subspaces of $\bar{\mathcal{C}}_i$. 

To do so, always starting from our group $\H=\langle h_1, h_2\rangle\leq \GL{s}(\Fqk)$, for each $i\in \{1,\dots, t\}$ and $j\in \{t+1,\dots, s\}$, we construct two sets of lines $\mathcal{A}_i$ and $\mathcal{B}_j$ of $\cG_{q^k}(1,s)$, such that $|\mathcal{A}_i|=|\mathcal{B}_j|=r$ and $\cG_{q^k}(1,s)=\mathcal{C}_i\cup \mathcal{A}_i\cup \mathcal{B}_j$. Afterward, we will use the field reduction technique to obtain a $k$-spread of $\Fq^n$.

Recall that the matrices $h_{1}$ and $h_{2}$ of $\H$ have multiplicative order $q^{kt}-1=r(q^{k}-1)$, with $\gcd(r,q^k-1)=1$. 
We are going to use the following subgroups of $\H$
\begin{equation}\label{eq:subgrupsH}
  \H_{2}
  :=
  \langle h_{2}^{q^{k}-1}\rangle, 
  \quad 
  \N
  :=
  \langle (h_1h_2)^r\rangle
  \quad \text{and} \quad 
  \T
  :=
  \langle h_{1}, h_{2}^{q^{k}-1}\rangle
  =
  \langle h_{1}\rangle\H_{2}.
\end{equation}

\begin{lemma}\label{lem:HNT}
Consider the groups $\H$, $\H_2$, $\N$ and $\T$ given in expression~\eqref{eq:subgrupsH}. 
One has: 
\begin{enumerate}
\item \label{HNT.1}
$|\H_{2}|=r$.
\item \label{HNT.2}
$\N\cap \T=\{I_s\}$ and $\H=\N \T$.
\end{enumerate}
\end{lemma}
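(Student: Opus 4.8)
The plan is to handle the two items separately, leaning throughout on the internal direct product structure of $\H$ supplied by Lemma~\ref{th1}. For item~\eqref{HNT.1} I would just compute the order of the generator $h_2^{q^k-1}$ of $\H_2$. By Lemma~\ref{th1} we have $o(h_2)=q^{kt}-1$, and since $k\mid kt$ the factor $q^k-1$ divides $q^{kt}-1$, whence $\gcd(q^{kt}-1,q^k-1)=q^k-1$. Therefore
\[
  \card{\H_2}=o\!\left(h_2^{q^k-1}\right)=\frac{o(h_2)}{\gcd\!\left(o(h_2),q^k-1\right)}=\frac{q^{kt}-1}{q^k-1}=r .
\]

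For item~\eqref{HNT.2} I would first record that Lemma~\ref{th1} makes $\H=\langle h_1\rangle\langle h_2\rangle$ an internal direct product $\langle h_1\rangle\times\langle h_2\rangle$ (the factors commute, each has order $q^{kt}-1$, and they meet trivially), so that the assignment $h_1^ah_2^b\leftrightarrow(a,b)$ identifies $\H$ with $\mathbb{Z}_{q^{kt}-1}\times\mathbb{Z}_{q^{kt}-1}$. In these coordinates $\N=\langle(h_1h_2)^r\rangle$ is the set of pairs $(x,x)$ with $x$ a multiple of $r$, while $\T=\langle h_1,h_2^{q^k-1}\rangle$ is the set of pairs $(a,b)$ with $b$ a multiple of $q^k-1$. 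Hence an element of $\N\cap\T$ is some $(x,x)$ with $r\mid x$ and $(q^k-1)\mid x$; as $\gcd(r,q^k-1)=1$ forces $\lcm(r,q^k-1)=r(q^k-1)=q^{kt}-1$ to divide $x$, we get $x\equiv 0$ and so $\N\cap\T=\{I_s\}$. Equivalently, using the block form~\eqref{eq:ab}: an element of $\T$ has $h_2$-exponent a multiple of $q^k-1$, so its top-left block is $C^a$ with no scalar $\alpha$-factor, and equating it to the block $\alpha^m I_t$ of a scalar matrix of $\N$ together with the fact that $\langle C\rangle$ and $\langle\alpha I_t\rangle$ have the coprime orders $r$ and $q^k-1$ forces $C^a=I_t$ and $\alpha^m=1$.

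Finally I would assemble the order count. From item~\eqref{HNT.1} and the inclusion $\langle h_1\rangle\cap\H_2\subseteq\langle h_1\rangle\cap\langle h_2\rangle=\{I_s\}$ of Lemma~\ref{th1} we obtain $\card{\T}=(q^{kt}-1)r$, while $\card{\N}=q^k-1$ by Theorem~\ref{lem:stabH}. Since $\H$ is Abelian, $\N\T$ is a subgroup of $\H$, and the triviality of the intersection gives
\[
  \card{\N\T}=\frac{\card{\N}\,\card{\T}}{\card{\N\cap\T}}=(q^k-1)(q^{kt}-1)r=(q^{kt}-1)^2=\card{\H},
\]
so $\H=\N\T$. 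The only genuine obstacle is the intersection computation $\N\cap\T=\{I_s\}$; the rest is bookkeeping with orders, and the hypothesis $\gcd(r,q^k-1)=1$ is precisely what drives that step.
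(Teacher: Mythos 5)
Your proof is correct, and for the crucial step it takes a genuinely different route from the paper. Item \eqref{HNT.1} is the same order computation in both cases. For the intersection $\N\cap\T=\{I_s\}$, however, the paper works with the explicit block form \eqref{eq:abT}: it uses $\N=\langle\alpha I_s\rangle$ (Theorem~\ref{lem:stabH}) to force the off-diagonal block $D_{a,(q^k-1)l}$ to vanish, invokes Lemma~\ref{lem:ab} to conclude $a=(q^k-1)l$, and then plays the coprime orders of $\langle C\rangle$ and $\langle\alpha I_t\rangle$ against each other to pin down $l=r$. You instead exploit the internal direct product $\H=\langle h_1\rangle\times\langle h_2\rangle$ guaranteed by Lemma~\ref{th1} to identify $\H$ with $\mathbb{Z}_{q^{kt}-1}\times\mathbb{Z}_{q^{kt}-1}$, where $\N$ becomes the diagonal multiples of $r$ and $\T$ becomes $\mathbb{Z}_{q^{kt}-1}\times(q^k-1)\mathbb{Z}_{q^{kt}-1}$; the intersection is then an instance of $\lcm(r,q^k-1)=q^{kt}-1$. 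This is cleaner and more structural: it bypasses Lemma~\ref{lem:ab} and all matrix computations, needing only the unique representation $h_1^ah_2^b$ (which does follow from Lemma~\ref{th1}\,(b),(c)). What the paper's matrix argument buys in exchange is that it stays in the same concrete language used throughout Section~\ref{sec:const} and reuses machinery (Lemma~\ref{lem:ab}, Theorem~\ref{lem:stabH}) that is needed elsewhere anyway. Your concluding order count $\card{\N\T}=\card{\N}\card{\T}/\card{\N\cap\T}=(q^{kt}-1)^2=\card{\H}$ coincides with the paper's, and your parenthetical ``equivalently'' matrix version is also sound, since an element of $\N$ is a scalar matrix $\alpha^mI_s$ and $\alpha^m=1$ already forces it to be $I_s$.
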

\begin{proof}
\ref{HNT.1}
Immediate, because $o(h_2)=q^{kt}-1=(q^k-1)r$, with $\gcd(r,q^k-1)=1$.

\ref{HNT.2}
Notice that, from expression~\eqref{eq:ab}, the general matrix expression of elements of $\T$ will be 
\begin{equation}\label{eq:abT}
  h_1^a h_2^{(q^k-1)l}
  =
  \begin{pmatrix}
    C^a           & D_{a,(q^k-1)l} \\
    0_{t\times t} & \alpha^aC^{(q^k-1)l}
  \end{pmatrix}, 
\end{equation}
where $a\in\{1,\dots,q^{kt}-1\}$ and $l\in\{1,\dots, r\}$. Now, assume that $a$ and $l$ are such that  $h_1^{a}h_2^{(q^k-1)l} \in \N \cap \T$, . 
Since $\N=\langle \alpha I_s\rangle$ by Theorem~\ref{lem:stabH}, then $D_{a,(q^k-1)l}=0_{t\times t}$ and therefore, by Lemma~\ref{lem:ab}, we have that $a=(q^k-1)l$. 
Moreover, we have that 
\[
  C^a \in  \langle \alpha I_t\rangle \cap  \langle C \rangle = \{I_t\}
\]
since that $o(\langle \alpha I_s\rangle)=q^k-1$ and $o(\langle C \rangle)=r$ with $\gcd(q^k-1,r)=1$. 
Therefore, $C^a=I_t$ and then $o(C)$ must divide $a=(q^k-1)l$.
Consequently, $r$ must divide $l$. 
Thus, $r=l$ since $l\in\{1,\dots, r\}$ and  $a=(q^k-1)r=q^{kt}-1$. 
In this way we obtain that $h_1^{a}h_2^{(q^k-1)l}=(h_1h_2)^{(q^k-1)r}=I_s$ and $\N\cap \T=\{I_s\}$. 
As a consequence, 
\[
  |\N\T|=|\N|\cdot|\T|=(q^k-1)(q^{kt}-1)r=(q^{kt}-1)^2=|\H|
\]
and we conclude that $\H=\N\T$.
\end{proof}

Our interest for the subgroup $\T$ given in expression~\eqref{eq:subgrupsH} is explained in the following result. 

\bigskip

\begin{theorem}\label{th:CiT}
Consider the orbit code $\mathcal{C}_i$ defined in expression~\eqref{eq:Ci}, for $i\in \{1,\dots, t\}$ and the group $\T$ given in expression~\eqref{eq:subgrupsH}.
It follows that 
\[
  \mathcal{C}_i
  =
  \orb_{\T}(\rsp(\vec{e}_{i})).
\]
\end{theorem}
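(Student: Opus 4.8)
The plan is to prove the equality by the standard argument that a subset of a finite set coinciding in cardinality must be the whole set. Since $\T=\langle h_1,h_2^{q^k-1}\rangle$ is a subgroup of $\H=\langle h_1,h_2\rangle$, every element of $\orb_{\T}(\rsp(\vec{e}_{i}))$ is obtained by acting with some element of $\H$, so immediately
\[
  \orb_{\T}(\rsp(\vec{e}_{i}))
  \subseteq
  \orb_{\H}(\rsp(\vec{e}_{i}))
  =
  \mathcal{C}_i .
\]
Thus it suffices to show that both orbits have the same (finite) size, and for this I would apply the orbit--stabilizer formula $|\orb_{\T}(\rsp(\vec{e}_{i}))|=|\T|/|\stab_{\T}(\rsp(\vec{e}_{i}))|$ and compare it with $|\mathcal{C}_i|=(q^{kt}-1)^2/(q^k-1)$, already computed after Theorem~\ref{lem:stabH}.

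First I would pin down $|\T|$. This is where Lemma~\ref{lem:HNT}\ref{HNT.2} does the work: from $\H=\N\T$ together with $\N\cap\T=\{I_s\}$ one gets $|\H|=|\N|\cdot|\T|$, and since $|\N|=|\langle\alpha I_s\rangle|=q^k-1$, it follows that
\[
  |\T|
  =
  \frac{|\H|}{|\N|}
  =
  \frac{(q^{kt}-1)^2}{q^k-1}.
\]
(Equivalently, one could argue directly that $\T=\langle h_1\rangle\H_{2}$ with $\langle h_1\rangle\cap\H_{2}\subseteq\langle h_1\rangle\cap\langle h_2\rangle=\{I_s\}$ by Lemma~\ref{th1}, giving $|\T|=o(h_1)\,|\H_2|=(q^{kt}-1)r$; but invoking Lemma~\ref{lem:HNT} is cleaner.)

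Next I would compute the stabilizer of $\rsp(\vec{e}_{i})$ inside $\T$. Because $\T$ is a subgroup of $\H$ acting by the same action, the stabilizer in $\T$ is simply the intersection of $\T$ with the stabilizer in $\H$, namely $\stab_{\T}(\rsp(\vec{e}_{i}))=\T\cap\stab_{\H}(\rsp(\vec{e}_{i}))$. By Theorem~\ref{lem:stabH} we have $\stab_{\H}(\rsp(\vec{e}_{i}))=\langle\alpha I_s\rangle=\N$, and by Lemma~\ref{lem:HNT}\ref{HNT.2} we have $\N\cap\T=\{I_s\}$. Hence $\stab_{\T}(\rsp(\vec{e}_{i}))=\{I_s\}$, and the orbit--stabilizer formula yields $|\orb_{\T}(\rsp(\vec{e}_{i}))|=|\T|=(q^{kt}-1)^2/(q^k-1)=|\mathcal{C}_i|$. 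Combined with the inclusion above, this forces $\mathcal{C}_i=\orb_{\T}(\rsp(\vec{e}_{i}))$.

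I do not expect a serious obstacle here; the proof is essentially bookkeeping once the right earlier results are identified. The one point deserving care is the observation that makes everything fit together: the stabilizer $\stab_{\H}(\rsp(\vec{e}_{i}))$ from Theorem~\ref{lem:stabH} is exactly the factor $\N$ appearing in the internal decomposition $\H=\N\T$ of Lemma~\ref{lem:HNT}. This is precisely why passing from $\H$ to the smaller group $\T$ loses no orbit elements: we are discarding exactly the stabilizing directions. The only routine verification is that $\stab_{\T}(\rsp(\vec{e}_{i}))=\T\cap\stab_{\H}(\rsp(\vec{e}_{i}))$, which is immediate from the definition of the stabilizer for the restricted action of a subgroup.
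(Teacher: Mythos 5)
Your proof is correct and rests on exactly the same two ingredients as the paper's: the decomposition $\H=\N\T$ with $\N\cap\T=\{I_s\}$ from Lemma~\ref{lem:HNT} and the identification $\stab_{\H}(\rsp(\vec{e}_{i}))=\N$ from Theorem~\ref{lem:stabH}. The only (cosmetic) difference is in how you finish: the paper factors each $h\in\H$ as $h_{\N}h_{\T}$ and absorbs $h_{\N}$ into the stabilizer to get the equality of orbits directly, whereas you establish the inclusion $\orb_{\T}(\rsp(\vec{e}_{i}))\subseteq\mathcal{C}_i$ and then match cardinalities via orbit--stabilizer; both are valid.
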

\begin{proof}
We have that
\[
  \mathcal{C}_i
  =
  \orb_{\H}(\rsp(\vec{e}_{i}))
  =
  \{\rsp(\vec{e}_{i})\cdot h \ | \ h \in \H\}.
\]
By Lemma~\ref{lem:HNT}, we have that $\H=\N\T$; therefore, given $h\in \H$, there exist $h_{\N} \in \N$ and $h_{\T} \in \T$ such that $h=h_{\N}h_{\T}$. 
Thus, we can express
\[
  \mathcal{C}_i
  =
  \left\{
    \left( \rsp(\vec{e}_{i})\cdot h_{\N} \right) \cdot h_{\T} 
    \ | \ 
    h_{\N} \in \N, \ h_{\T} \in \T 
  \right\}.
\]
Now, by Theorem \ref{lem:stabH}, $\N=\stab_{\H}(\rsp(\vec{e}_{i}))$ and we obtain that
\[
  \mathcal{C}_i
  =
  \left\{
    \rsp(\vec{e}_{i})\cdot h_{\T} 
    \ | \ 
    h_{\T} \in \T
  \right\}
  =
  \orb_{\T}(\rsp(\vec{e}_{i})).
\]
\end{proof}

Due to the fact that we can write $q^{kt}-1=(q^k-1)r$, we can form a partition of the set $\{1,\dots,q^{kt}-1\}$ through the sets $A_m=\{a_mr+m\ |\ 0\leq a_m\leq q^k-2\}$, for $m\in\{1,\dots,r\}$, obtaining that
\begin{equation}\label{eq:particio}
  \left\{
    1,\dots,q^{kt}-1
  \right\}
  =
  A_1\cup A_2\cup\dots\cup A_r.
\end{equation}

Now, according to Theorem~\ref{th:CiT}, for each $i\in\{1,\dots,t\}$, the orbit code $\mathcal{C}_i$ can be described as the set of lines of $\Fqk^s$ generated by the $i$-th row of each matrix of $\T$. 
We are going to use the partition of $\{1,\dots,q^{kt}-1\}$ provided in expression~\eqref{eq:particio} in order to analyse these lines. 

\bigskip

\begin{lemma} \label{lem:Bm}
For any integers $a\in\{1,\dots, q^{kt}-1\}$ and $l\in\{1,\dots,r\}$, consider the matrices $C^a$ and $D_{a,(q^k-1)l}$ given in expression~\eqref{eq:abT}. 
For any $m\in\{1,\dots,r\}$ it follows that
\begin{enumerate}
\item \label{lem:Bm.1}
$C^a=C^m$ if and only if $a\in A_m$.
\item \label{lem:Bm.2}
There exists $B_m\in \Fqk[M_t]$ such that $B_m\neq D_{a,(q^k-1)l}$, for all $a\in A_m$ and for all $l\in\{1,\dots,r\}$.
\end{enumerate}
\end{lemma}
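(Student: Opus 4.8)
The plan is to treat the two parts separately: part \ref{lem:Bm.1} is a direct computation with multiplicative orders, while part \ref{lem:Bm.2} reduces to a counting argument inside the field $\Fqk[M_t]$.

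For part \ref{lem:Bm.1}, I would start from the fact that $o(C)=r$. Hence $C^a=C^m$ holds precisely when $a\equiv m \pmod{r}$. On the other hand, by the very definition $A_m=\{a_mr+m\mid 0\leq a_m\leq q^k-2\}$, an integer $a\in\{1,\dots,q^{kt}-1\}$ lies in $A_m$ exactly when $a\equiv m\pmod{r}$, since the sets $A_1,\dots,A_r$ partition $\{1,\dots,q^{kt}-1\}$ according to residue classes modulo $r$ (with $m\in\{1,\dots,r\}$), as recorded in expression~\eqref{eq:particio}. Combining these two equivalences yields $C^a=C^m$ if and only if $a\in A_m$.

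For part \ref{lem:Bm.2}, the key observation is a cardinality comparison. As $a$ ranges over $A_m$ and $l$ ranges over $\{1,\dots,r\}$, the pair $(a,l)$ takes exactly $|A_m|\cdot r=(q^k-1)r=q^{kt}-1$ values, because $|A_m|=q^k-1$ by construction. Consequently the set $\{D_{a,(q^k-1)l}\mid a\in A_m,\ 1\leq l\leq r\}$ contains \emph{at most} $q^{kt}-1$ distinct matrices of $\Fqk[M_t]$. Since $\Fqk[M_t]\cong \mathbb{F}_{q^{kt}}$ has exactly $q^{kt}$ elements, there must exist at least one matrix $B_m\in\Fqk[M_t]$ that does not appear among the $D_{a,(q^k-1)l}$, which is the required element.

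The main obstacle is the tightness of the count in part \ref{lem:Bm.2}: the number of admissible pairs equals $q^{kt}-1$, exactly one less than the order of the field, so the pigeonhole argument leaves room for a single uncovered element $B_m$ and nothing more. Getting this count right --- in particular verifying $|A_m|=q^k-1$ from the range $0\leq a_m\leq q^k-2$ and that $l$ contributes exactly $r$ distinct values --- is what makes the argument succeed; any looser bound on the number of pairs would be useless, since the count is already saturating the field. Note also that the argument only establishes existence of $B_m$, not uniqueness or an explicit form, which is all the statement requires.
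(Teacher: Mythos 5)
Your proof is correct and follows essentially the same route as the paper's: part (a) is handled via $o(C)=r$ together with the identification of $A_m$ as the residue class of $m$ modulo $r$ inside $\{1,\dots,q^{kt}-1\}$, and part (b) uses the identical pigeonhole count $\card{A_m}\cdot r=(q^k-1)r=q^{kt}-1<q^{kt}=\card{\Fqk[M_t]}$. No further changes are needed.
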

\begin{proof}
\ref{lem:Bm.1}
If $C^a=C^m$ then $o(C)=r$ must divide $a-m$ and then $a\in A_m$. 
Now, if $a \in A_m$, then  there exists $a_m \in \{0, \ldots, q^k-2\}$  such that $a=a_mr+m$. 
Therefore, as $o(C)=r$, we have that $C^a=C^{a_mr+m}=C^m$.  

\ref{lem:Bm.2}
Notice that $\card{A_m}r=(q^k-1)r=q^{kt}-1$. 
Therefore, there exist at most $q^{kt}-1$ different matrices $D_{a,(q^k-1)l}$, when we consider all $a \in A_m$ and all $l\in\{1,\dots r\}$. 
Since $\Fqk[M_t]$ has $q^{kt}$ elements, we can find at least one matrix $B_m\in \Fqk[M_t]$ such that $B_m\neq D_{a,(q^k-1)l}$, for all $a\in A_m$ and for all $l\in\{1,\dots,r\}$.
\end{proof}

Next, for each $i\in\{1,\dots,t\}$, we use the matrices $B_m$ obtained in Lemma \ref{lem:Bm}, in order to define the following sets of lines of $\Fqk^s$ 
\begin{equation} \label{eq:Ai}
   \mathcal{A}_i:=\{\rsp((C^m | B_m)_i)\ |\ 1\leq m\leq r\}.
\end{equation}
Finally, we use the action of the group $\H_2$ on the lines of $\Fqk^s$ generated by the canonical vectors $\vec{e}_{j}$, for $j\in\{t+1,\dots,s\}$ and we consider the orbit codes
\begin{equation}\label{eq:codismenuts}
  \mathcal{B}_{j}
  :=
  \orb_{\H_{2}}(\rsp(\vec{e}_{j}))
  \subseteq
  \cG_{q^k}(1,s).
\end{equation}

Now, we have all the ingredients to complete each orbit code $\mathcal{C}_i$ until we get the whole space of lines $\cG_{q^k}(1,s)$ of $\Fqk^s$. 
To this end, we prove now the following technical lemma.

\begin{lemma} \label{lema:BCD}
Consider $a \in \{1, \ldots, q^{kt}-1\}$, $m,l \in \{1, \ldots,r\}$ and $i \in \{1, \ldots,t\}$. 
Then, the matrices $\left( C^m \ | \ B_m\right)$ and $\left( C^a \ | \ D_{a,(q^k-1)l}\right)$ are different if and only if the $i$-th row of the matrices $\left(C^m \ | \ B_m\right)$ and $ \left( C^a \ | \ D_{a,(q^k-1)l}\right)$ are different. 
\end{lemma}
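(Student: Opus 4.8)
The plan is to prove the logically equivalent statement that the two block matrices \emph{coincide} if and only if their $i$-th rows coincide; negating both sides then yields the lemma exactly as stated. One implication is immediate: if the full $t\times s$ matrices are equal, then in particular their $i$-th rows are equal. Hence the entire content lies in the converse, namely that agreement on the single row $i$ forces $\left(C^m \mid B_m\right)$ and $\left(C^a \mid D_{a,(q^k-1)l}\right)$ to be identical. I would prove this blockwise, exploiting that every block involved is an element of the field $\Fqk[M_t]$.

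The key ingredient is the fact already used in the proofs of Lemma~\ref{lem:ab} and Theorem~\ref{lem:stabH}: a nonzero element of $\Fqk[M_t]$ is an invertible matrix, and an invertible matrix has no zero row. Equivalently, if a matrix $P\in\Fqk[M_t]$ has its $i$-th row equal to zero, then $P=0_{t\times t}$. So first I would verify that all four blocks lie in $\Fqk[M_t]$: the matrices $C^a$ and $C^m$ are powers of $C=M_t^{q^k-1}\in\Fqk[M_t]$, the matrix $D_{a,(q^k-1)l}$ is an $\Fqk$-linear combination of powers of $C$ by expression~\eqref{eq:D_ab}, and $B_m\in\Fqk[M_t]$ by Lemma~\ref{lem:Bm}\ref{lem:Bm.2}. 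Since $\Fqk[M_t]$ is a field, it is closed under subtraction, so the differences $C^m-C^a$ and $B_m-D_{a,(q^k-1)l}$ are again elements of $\Fqk[M_t]$.

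Now suppose the $i$-th rows of the two block matrices agree. Because the $i$-th row of a block matrix $\left(P\mid Q\right)$ is the concatenation of the $i$-th rows of $P$ and of $Q$, this says precisely that the $i$-th row of $C^m-C^a$ is zero \emph{and} the $i$-th row of $B_m-D_{a,(q^k-1)l}$ is zero. Applying the no-zero-row property separately to these two differences, which are legitimate elements of the field $\Fqk[M_t]$, forces $C^m-C^a=0_{t\times t}$ and $B_m-D_{a,(q^k-1)l}=0_{t\times t}$, that is $C^m=C^a$ and $B_m=D_{a,(q^k-1)l}$; hence the full block matrices coincide. The main point to get right is therefore not a computation but the structural observation that both halves are elements of the field $\Fqk[M_t]$, so that the vanishing of a single row propagates to the whole block; the rest is bookkeeping together with the trivial forward implication.
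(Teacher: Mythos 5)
Your proof is correct and rests on exactly the same key observation as the paper's: all four blocks are elements of the field $\Fqk[M_t]$, so the vanishing of the $i$-th row of a difference such as $C^m-C^a$ or $B_m-D_{a,(q^k-1)l}$ forces the entire difference to be the zero matrix. The only difference is organizational: you prove the equivalent statement ``the matrices coincide if and only if their $i$-th rows coincide'' directly, whereas the paper argues by contradiction and finishes by invoking the defining property of $B_m$ from Lemma~\ref{lem:Bm}, a final step your cleaner formulation shows is not actually needed.
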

\begin{proof}
Suppose that $\left( C^m \ | \ B_m\right)$ and $\left( C^a \ | \ D_{a,(q^k-1)l}\right)$ are different matrices and, arguing by contradiction, that their $i$-th rows are equal, that is, $\left( C^m \ | \ B_m\right)_i = \left( C^a \ | \ D_{a,(q^k-1)l}\right)_i$. In particular, this means that the $i$-th row of the matrix $C^m-C^a$ is null. Due to $C^m-C^a \in \Fqk[M_t]$, we have that $C^m=C^a$. Then $a \in A_m$, by Lemma~\ref{lem:Bm}. Moreover, since the $i$-th row of the matrix $B_m-D_{a,(q^k-1)l}\in \Fqk[M_t] $ is null, it follows that $B_m=D_{a,(q^k-1)l}$, but this is a contradiction with the choice of $B_m$. 

The other implication of the statement is trivial.    
\end{proof}

\begin{theorem} \label{th:B}
Consider integers $i\in\{1,\dots,t\}$, $j\in\{t+1,\dots,s\}$ and the one dimensional codes of $\Fqk^s$, $\mathcal{C}_i$, $\mathcal{A}_i$  and $\mathcal{B}_{j}$ given by expressions \eqref{eq:Ci}, \eqref{eq:Ai}, and \eqref{eq:codismenuts}, respectively. 
One has:
\begin{enumerate}
\item \label{th:B.1}
$|\mathcal{A}_i|=|\mathcal{B}_{j}|=r$.
\item \label{th:B.2}
$\cG_{q^k}(1,s)=\mathcal{C}_i\cup \mathcal{A}_i \cup \mathcal{B}_{j}$ with empty pairwise intersection.
\end{enumerate}
\end{theorem}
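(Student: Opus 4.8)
The plan is to establish the two cardinalities in \ref{th:B.1}, then prove that the three codes are pairwise disjoint, and finally deduce the covering in \ref{th:B.2} by a pure counting argument. Indeed, by Theorem~\ref{lem:stabH} we already know $\card{\mathcal{C}_i}=(q^{kt}-1)r$, so granting \ref{th:B.1} the three sizes add up to
\[
  (q^{kt}-1)r+2r=r(q^{kt}+1)=\frac{q^{kt}-1}{q^k-1}(q^{kt}+1)=\frac{q^{ks}-1}{q^k-1}=\card{\cG_{q^k}(1,s)},
\]
which is exactly the computation recorded at the end of Subsection~\ref{sec:3.1}. Hence, once the union is shown to be disjoint, it must exhaust the whole Grassmannian of lines, giving \ref{th:B.2}.

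For \ref{th:B.1} I would treat the two codes separately. For $\mathcal{B}_{j}$, note that every power $h_2^{(q^k-1)l}$ is block upper triangular with lower–right block $C^{(q^k-1)l}$, so for $j\in\{t+1,\dots,s\}$ the line $\rsp(\vec{e}_{j})\cdot h_2^{(q^k-1)l}$ is governed by the $(j-t)$-th row of $C^{(q^k-1)l}$. Such a power stabilises $\rsp(\vec{e}_{j})$ exactly when $C^{(q^k-1)l}$ fixes the line $\rsp(\vec{e}_{j-t})$ of $\Fqk^t$; by the Singer result recalled in Remark~\ref{rem:singer} this forces $C^{(q^k-1)l}\in\langle\alpha I_t\rangle$, and since $\langle C\rangle\cap\langle\alpha I_t\rangle=\{I_t\}$ (the orders $r$ and $q^k-1$ being coprime), one gets $C^{(q^k-1)l}=I_t$, i.e. $r\mid l$. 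Thus $\stab_{\H_{2}}(\rsp(\vec{e}_{j}))=\{I_s\}$ and $\card{\mathcal{B}_{j}}=\card{\H_{2}}=r$ by Lemma~\ref{lem:HNT}\ref{HNT.1}. For $\mathcal{A}_i$, I must check that the $r$ generators give distinct lines: if $\rsp((C^m \mid B_m)_i)=\rsp((C^{m'}\mid B_{m'})_i)$, comparing first blocks gives (the $i$-th row of) $C^m=\lambda C^{m'}$ for some scalar $\lambda$, and since $C^m-\lambda C^{m'}\in\Fqk[M_t]$ is a field element with a zero row it must vanish; hence $C^{m-m'}=\lambda I_t\in\langle C\rangle\cap\langle\alpha I_t\rangle=\{I_t\}$, forcing $m=m'$.

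For the disjointness in \ref{th:B.2}, the two easy intersections come from the block shape: by Theorem~\ref{th:CiT} and expression~\eqref{eq:abT}, every line of $\mathcal{C}_i$ and of $\mathcal{A}_i$ (with $i\le t$) has first block equal to the $i$-th row of an invertible matrix $C^a$ (resp. $C^m$) of $\Fqk[M_t]$, hence nonzero, whereas every line of $\mathcal{B}_{j}$ (with $j>t$) has zero first block; therefore $\mathcal{B}_{j}$ is disjoint from both $\mathcal{C}_i$ and $\mathcal{A}_i$. The remaining, and main, intersection to rule out is $\mathcal{C}_i\cap\mathcal{A}_i$. Suppose $\rsp((C^m\mid B_m)_i)=\rsp((C^a\mid D_{a,(q^k-1)l})_i)$ for some $a,l$. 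Matching first blocks as above yields $C^{m-a}=\lambda I_t\in\langle C\rangle\cap\langle\alpha I_t\rangle=\{I_t\}$, so $\lambda=1$ and, by Lemma~\ref{lem:Bm}\ref{lem:Bm.1}, $a\in A_m$. With $\lambda=1$ the two $i$-th rows are literally equal, so by Lemma~\ref{lema:BCD} the full matrices coincide, giving $B_m=D_{a,(q^k-1)l}$ with $a\in A_m$ — contradicting the choice of $B_m$ in Lemma~\ref{lem:Bm}\ref{lem:Bm.2}. Hence $\mathcal{C}_i\cap\mathcal{A}_i=\emptyset$, and combining the three disjointness statements with the counting identity of the first paragraph proves \ref{th:B.2}. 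The step I expect to be most delicate is precisely this last one: one must show that a coincidence of \emph{lines} forces a coincidence of \emph{rows} (not merely proportionality), and the crux is that the scalar $\lambda$ is pinned to $1$ by $\gcd(r,q^k-1)=1$ through $\langle C\rangle\cap\langle\alpha I_t\rangle=\{I_t\}$; only then do Lemma~\ref{lema:BCD} and the choice of $B_m$ deliver the contradiction.
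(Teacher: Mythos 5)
Your proof is correct and follows essentially the same route as the paper's: the Singer/stabilizer argument for $\card{\mathcal{B}_j}=r$, the block-shape argument for the two easy disjointness claims, and the $\langle C\rangle\cap\langle\alpha I_t\rangle=\{I_t\}$ pinning of the scalar followed by Lemma~\ref{lema:BCD} and the choice of $B_m$ for $\mathcal{C}_i\cap\mathcal{A}_i=\emptyset$. You are in fact slightly more explicit than the paper in two places it leaves implicit --- ruling out \emph{proportional} (not merely equal) rows when counting $\mathcal{A}_i$, and the final cardinality count that upgrades disjointness to a covering of $\cG_{q^k}(1,s)$ --- but these are refinements of the same argument, not a different one.
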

\begin{proof}
\ref{th:B.1}
Notice that $|\mathcal{A}_i|=r$, for any $i\in\{1,\dots,t\}$, by Lemma \ref{lema:BCD}. 
Now, consider $j \in\{t+1,\dots,s\}$ and let us see that $\stab_{\H_{2}}(\rsp(\vec{e}_{j}))=\{I_s\}$. 
Let $h_2^{(q^k-1)l}\in \stab_{\H_{2}}(\rsp(\vec{e}_{j}))$, for some $l\in\{1,\dots, r\}$. 
Taking into account the matrix expression of elements of $\H_2$, we can write 
\begin{equation} \label{matrixh1}
  h_2^{(q^k-1)l}
  =
  \begin{pmatrix}
    I_{t}         & D_{0,(q^k-1)l} \\
    0_{t\times t} &  C^{(q^k-1)l}
  \end{pmatrix}.
\end{equation} 
Now, since $j\in\{t+1,\dots, s\}$, we can write $j=t+j_1$, for some $j_1\in\{1,\dots,t\}$ and consider the $j_1$-th canonical vector $\vec{e}_{j_1}$ of $\Fqk^t$. 
In this way, using Remark \ref{rem:singer}, we obtain that  
\[
  C^{(q^k-1)l}
  \in 
  \stab_{\langle C^{q^k-1}\rangle} (\rsp(\vec{e}_{j_1}))
  \subseteq 
  \stab_{\langle M_t\rangle} (\rsp(\vec{e}_{j_1}))
  =
  \langle \alpha I_t\rangle.
\]
But
\[
  \stab_{\langle C^{q^k-1}\rangle} (\rsp(\vec{e}_{j_1}))
  \subseteq 
  \langle C\rangle \cap \langle \alpha I_t\rangle=\{I_t\},
\]
since $\gcd(r,q^k-1)=1$. 
Thus $l=r$ and $\stab_{\H_{2}}(\rsp(\vec{e}_{j}))=\{I_s\}$. 
That is, $|\mathcal{B}_{j}|=|\H_{2}|=r$.

\ref{th:B.2}
Let $i\in\{1,\dots,t\}$ and $j\in\{t+1,\dots,s\}$. 
We prove that $\mathcal{C}_i$, $\mathcal{A}_i$ and $\mathcal{B}_{j}$ have an empty pairwise intersection.
\begin{itemize}
\item 
$\mathcal{C}_i \cap \mathcal{B}_{j} = \emptyset$. 
Let $u \in \mathcal{C}_i \cap \mathcal{B}_{j}$.     
Since $u \in \mathcal{C}_i$, then $u=\rsp \left( h_1^{a}h_2^b\right)_{i}$ for some $a,b \in \{1, \ldots,q^{kt}-1\}$;
that is, $u$ is the line of $\Fqk^s$ generated by the $i$-th row of the matrix given in expression~\eqref{eq:ab}.
However, since $u \in \mathcal{B}_{j}$, then
$u$ is also the line of $\Fqk^s$ generated by the $j$-th row of a matrix like the one given in expression~\eqref{matrixh1}.
Thus, since $i\in \{1,\dots, t\}$ and $j\in\{t+1,\dots, s\}$, we deduce that the $i$-th row of $\alpha^bC^a$ must be zero, which is not possible, since this is a non-zero matrix in the field $\Fqk[M]$. Therefore, $\mathcal{C}_i \cap \mathcal{B}_{j}= \emptyset$.
\item $\mathcal{A}_i \cap \mathcal{B}_{j} = \emptyset$. 
The reasoning is analogous to the previous case, because the first block matrix of size $t \times t$ of matrices associated to $\mathcal{A}_i = \{\rsp((C^m \ | \ B_m)_i) \ | \ 1\leq m\leq r\}$ is always a power of $C$.
\item $\mathcal{C}_i \cap \mathcal{A}_{i} = \emptyset$.
Suppose that $u \in \mathcal{C}_i \cap \mathcal{A}_{i}$ with $i \in \{1, \ldots, t\}$.
On the one hand, since $\mathcal{C}_i = \orb_{\T}(\rsp(\vec{e}_{i}))$ by Theorem~\ref{th:CiT}, we know that $u$ is the line of $\Fqk^s$  generated by the $i$-th row of the matrix 
\begin{equation*}
  h_1^a h_2^{(q^k-1)l}
  =
  \begin{pmatrix}
    C^a           & D_{a,(q^k-1)l} \\
    0_{t\times t} & \alpha^aC^{(q^k-1)l}
  \end{pmatrix},
\end{equation*}
for some integers $a\in\{1,\dots,q^{kt}-1\}$ and $l\in\{1,\dots, r\}$.

On the other hand, as $u \in \mathcal{A}_{i}$, we have that the $i$-th row of $h_1^a h_2^{(q^k-1)l}$ must be proportional to the $i$-th row of $(C^m \ | \ B_m)$ for some $m \in \{1, \ldots, r\}$. 
In other words, 
\[
  (C^a \ | \ D_{a,(q^k-1)l})_i
  =
  (\alpha^d(C^m \ | \ B_m))_i,
\]
for some $\alpha^d \in \Fqk^{\star}$. 
In particular, this means that the $i$-th row of $C^a - \alpha^dC^m$ must be zero. 
Since this matrix is in the field $\Fqk[M_t]$,  we conclude that $C^a=\alpha^dC^m$. 
Therefore, $C^{a-m} = \alpha^d I_t \in \langle \alpha I_t\rangle \cap \langle C \rangle =\{I_t\}$ and then 
$\alpha^d=1$ and $a=m$. 
We have that
\[
  (C^a \ | \ D_{a,(q^k-1)l})_i
  =
  (C^m \ | \ D_{m,(q^k-1)l})_i
  =
  (C^m \ | \ B_m)_i.
\]

From Lemma~\ref{lema:BCD}, we obtain that the whole matrices must be equal, that is, $(C^m \ | \ B_m) = (C^m \ | \ D_{m,(q^k-1)l})$. 
But this is not possible because $m\in A_m$ and  $B_m \neq D_{a,(q^k-1)l}$ for all $a \in A_m$ and for all $l\in \{1, \ldots, r\}$, by Lemma \ref{lem:Bm}.
\end{itemize}
\end{proof}

Next, just as we construct the  $k$-partial spreads $\bar{\cC_i}$ of $\Fq^n$ in expression \eqref{eq:Cibarra}, we construct now $k$-partial spreads from the codes $\mathcal{A}_{i}$ and $\mathcal{B}_{j}$, for any $i\in\{1,\dots,t\}$ and $j\in\{t+1,\dots,s\}$. Denote $\bar{\H}_{2}:=\psi(\H_{2})$. 
We have that 
\begin{align}
  \bar{\mathcal{A}}_{i}
  & :=
  \varphi(\mathcal{A}_i)
  =
  \{\varphi(\rsp((C^m \ | \ B_m)_i))\ |\ 1\leq m\leq r\}, \label{eq:Abarra} \\
  \bar{\mathcal{B}}_{j}
  & :=
  \varphi(\mathcal{B}_{j})
  =
  \varphi(\orb_{\H_{2}}(\rsp(\vec{e}_{j}))) \nonumber  \\
  & =
  \orb_{\psi(\H_{2})}(\varphi(\rsp(\vec{e}_{j})))
  =
  \orb_{\bar{\H}_{2}}(\mathcal{U}_{k,j}) \label{eq:Bbarra}
\end{align}
are $k$-partial spreads of $\Fq^n$. 

Finally, we present our last result, which describes how we can obtain the Desarguesian spread given in expression \eqref{def:spreadS} from these three partial spreads. 

\bigskip

\begin{theorem}
For any $i\in\{1,\dots,t\}$ and $j\in\{t+1,\dots,s\}$, consider the orbit codes $\bar{\cC}_{i}$, $\bar{\mathcal{A}}_{i}$, and $\bar{\mathcal{B}}_{j}$ defined in expressions \eqref{eq:Cibarra}, \eqref{eq:Abarra}, and \eqref{eq:Bbarra}, respectively.
Then, the code $\bar{\cC_{i}}\cup \bar{\mathcal{A}}_{i}\cup \bar{\mathcal{B}}_{j}$ is a $k$-spread of $\Fq^n$.
\end{theorem}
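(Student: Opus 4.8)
The plan is to deduce the statement directly from Theorem~\ref{th:B} together with the structural properties of the field reduction map $\varphi$ that were established in Subsection~\ref{sec:subspace codes}; no new machinery is required, since all the substantive work has already been done at the level of lines of $\Fqk^s$.

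First I would invoke Theorem~\ref{th:B}\ref{th:B.2}, which gives, for the fixed indices $i\in\{1,\dots,t\}$ and $j\in\{t+1,\dots,s\}$, the partition
\[
  \cG_{q^k}(1,s)
  =
  \mathcal{C}_i \cup \mathcal{A}_i \cup \mathcal{B}_{j}
\]
of the whole Grassmannian of lines into three pairwise disjoint families. Next I would apply $\varphi$ to this identity. Because $\varphi$ is a set map it commutes with unions, so that
\[
  \varphi(\cG_{q^k}(1,s))
  =
  \varphi(\mathcal{C}_i)\cup \varphi(\mathcal{A}_i)\cup \varphi(\mathcal{B}_{j})
  =
  \bar{\cC}_{i}\cup \bar{\mathcal{A}}_{i}\cup \bar{\mathcal{B}}_{j},
\]
using the definitions in expressions~\eqref{eq:Cibarra}, \eqref{eq:Abarra} and \eqref{eq:Bbarra}. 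Moreover, since $\varphi$ is injective, it preserves disjointness: from $\mathcal{C}_i\cap\mathcal{A}_i=\mathcal{C}_i\cap\mathcal{B}_{j}=\mathcal{A}_i\cap\mathcal{B}_{j}=\emptyset$ one obtains the corresponding pairwise empty intersections of $\bar{\cC}_{i}$, $\bar{\mathcal{A}}_{i}$ and $\bar{\mathcal{B}}_{j}$ as families of $k$-subspaces of $\Fq^n$. (As already used in Subsection~\ref{sec:3.1}, injectivity of $\varphi$ also guarantees trivial pairwise intersection of the individual codewords, so the three families are genuinely compatible as a partial spread.)

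Finally I would recall from expression~\eqref{def:spreadS} that $\varphi(\cG_{q^k}(1,s))$ is precisely the Desarguesian $k$-spread of $\Fq^n$. Combining this with the displayed equality yields
\[
  \bar{\cC}_{i}\cup \bar{\mathcal{A}}_{i}\cup \bar{\mathcal{B}}_{j}
  =
  \varphi(\cG_{q^k}(1,s)),
\]
so the union is exactly the Desarguesian $k$-spread and is therefore a $k$-spread of $\Fq^n$, as claimed. As a consistency check I would verify the cardinalities: $|\bar{\cC}_i|+|\bar{\mathcal{A}}_i|+|\bar{\mathcal{B}}_j|=(q^{kt}-1)r + r + r = r(q^{kt}+1)=\frac{q^{n}-1}{q^{k}-1}$, matching the count in Subsection~\ref{sec:3.1} and confirming that the three disjoint pieces cover the full spread with no overlap. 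I do not expect any serious obstacle here: the difficulty of the construction is concentrated entirely in Theorem~\ref{th:B} and in the earlier verification that $\varphi$ is injective, intersection-preserving, and sends the complete line Grassmannian to the Desarguesian spread; the present theorem is simply the functorial image of the line-level partition under $\varphi$.
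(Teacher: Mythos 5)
Your argument is correct and is precisely the one the paper intends (the paper in fact states this theorem without writing out a proof, since it follows immediately from Theorem~\ref{th:B} once the machinery of Subsection~\ref{sec:subspace codes} is in place): you apply the injective field reduction map $\varphi$ to the disjoint partition $\cG_{q^k}(1,s)=\mathcal{C}_i\cup\mathcal{A}_i\cup\mathcal{B}_j$ and identify the image with the Desarguesian spread of expression~\eqref{def:spreadS}. The cardinality check $|\bar{\cC}_i|+2r=\frac{q^n-1}{q^k-1}$ also matches the computation at the end of Subsection~\ref{sec:3.1}, so nothing is missing.
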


\section{Conclusions}
\label{sec:concl}

In this paper, we have dealt with the orbital construction of a $k$-dimensional spread in $\Fq^n$, where $n$ is an even number and $k$ divides $n$, using a non-cyclic Abelian group. 
Our results generalise those obtained by Chen and Liang
in \cite{Chen2021b} for $\Fq^{2k}$. 
However, the techniques we have used are new and not easily detached from this work. 
We have proceeded in two stages. 
First, we have obtained a family of orbit codes of $\Fq^n$ having dimension $k$ and maximum distance. 
For this, we have constructed a non-cyclic Abelian group of $\GL{n}(\Fq)$ and selected appropriate $k$-dimensional subspaces of $\Fq^n$ using the field reduction technique. 
Afterwards, from each previously given orbit code, we have managed to achieve a $k$-spread of $\Fq^n$ by adding two specific subsets of subspaces, one of them with an orbital structure. 

\bmhead{Acknowledgements}

This work was partially supported by the Spanish I+D+i project PID2022-142159OB-I00 of the Ministerio de Ciencia e Innovaci\'{o}n, I+D+i project
CIAICO/2022/167 of the Generalitat Valenciana, and the I+D+i project VIGROB-287 of the Universitat d`Alacant.



\begin{thebibliography}{10}
\providecommand{\url}[1]{{#1}}
\providecommand{\urlprefix}{URL }
\providecommand{\doi}[1]{\url{https://doi.org/#1}}
\bibcommenthead

\bibitem{Ahlswede2000}
R.~Ahlswede, N.~Cai, S.Y.R. Li, R.W. Yeung, Network information flow.
\newblock IEEE Transactions on Information Theory \textbf{46}(4), 1204--1216 (2000).
\newblock \doi{10.1109/18.850663}

\bibitem{AlonsoGonzalez2021}
C.~Alonso-Gonz\'{a}lez, M.A. Navarro-P\'{e}rez, X.~Soler-Escriv\`{a}, An
  orbital construction of optimum distance ﬂag codes.
\newblock Finite Fields and their Applications \textbf{73}, Article 101861 (2021).
\newblock \doi{10.1016/j.ffa.2021.101861}

\bibitem{Bardestani2015}
F.~Bardestani, A.~Iranmanesh, Cyclic orbit codes with the normalizer of a {Singer} subgroup.
\newblock Journal of Sciences, Islamic Republic of Iran \textbf{26}(1), 49--55 (2015)

\bibitem{Bassoli2013}
R.~Bassoli, H.~Marques, J.~Rodriguez, K.W. Shum, R.~Tafazolli, Network coding theory: {A} survey.
\newblock IEEE Communications Surveys \& Tutorials \textbf{15}(4), 1950--1978 (2013).
\newblock \doi{10.1109/SURV.2013.013013.00104}

\bibitem{Beth1999bk}
T.~Beth, D.~Jungnickel, H.~Lenz, \emph{Design Theory: Volume 1}.
\newblock Cambridge University Press, Cambridge, UK, 1999.

\bibitem{Chen2020b}
S.d. Chen, J.y. Liang, New constructions of orbit codes based on the operations of orbit codes.
\newblock Acta Mathematicae Applicatae Sinica, English Series \textbf{36}, 803--815 (2020).
\newblock \doi{10.1007/s10255-020-0974-8}

\bibitem{Chen2021b}
S.~Chen, J.~Liang, Constructions of spread codes based on {Abelian} non-cyclic orbit codes.
\newblock Linear Algebra and its Applications \textbf{608}, 54--67 (2021).
\newblock \doi{10.1016/j.laa.2020.08.021}

\bibitem{Climent2019b}
J.J. Climent, V.~Requena, X.~Soler-Escriv\`{a}, A construction of {Abelian} non-cyclic orbit codes.
\newblock Cryptography and Communications \textbf{11}(5), 839--852 (2019).
\newblock \doi{10.1007/s12095-018-0306-5}

\bibitem{Gluesing-Luerssen2019}
H.~Gluesing-Luerssen, A.L. Horlemann-Trautmann, Symbol erasure correction in random networks with spread codes.
\newblock IEEE Transactions on Information Theory \textbf{65}(4), 2075--2091 (2019).
\newblock \doi{10.1109/TIT.2018.2880767}

\bibitem{Gluesing-Luerssen2015}
H.~Gluesing-Luerssen, K.~Morrison, C.~Troha, Cyclic orbit codes and stabilizer subfields.
\newblock Advances in Mathematics of Communications \textbf{9}(2), 177--197 (2015).
\newblock \doi{10.3934/amc.2015.9.177}

\bibitem{Gorla2012}
G.~Gorla, F.~Manganiello, J.~Rosenthal, An algebraic approach for decoding spread codes.
\newblock Advances in Mathematics of Communications \textbf{6}(4), 443--466 (2012).
\newblock \doi{10.3934/amc.2012.6.443}

\bibitem{Gorla2014}
E.~Gorla, A.~Ravagnani, Partial spreads in random network coding.
\newblock Finite Fields and their Applications \textbf{26}, 104--115 (2014).
\newblock \doi{10.1016/j.ffa.2013.11.007}

\bibitem{Hirschfeld1998bk}
J.~Hirschfeld, \emph{Projective Geometries over Finite Field}, 2nd edn.
\newblock Oxford Mathematical Monographs.
Oxford University Press, Oxford, UK, 1998.

\bibitem{Honold2018}
T.~Honold, M.~Kiermaier, S.~Kurz, in \emph{Network Coding and Subspace Designs}, ed. by M.~Greferath, M.~Osvin~Pav\v{a}evi\'{c}, N.~Silberstein, M.A. V\'{a}zquez-Castro, Signals and Communication Technology (Springer International Publishing AG, Cham, Switzerland, 2018), pp. 131--170.
\newblock \doi{10.1007/978-3-319-70293-3_7}

\bibitem{Trautmann2018}
A.L. Horlemann-Trautmann, J.~Rosenthal, in \emph{Network Coding and Subspace Designs}, ed. by M.~Greferath, M.~Osvin~Pav\v{a}evi\'{c}, N.~Silberstein, M.A. V\'{a}zquez-Castro, Signals and Communication Technology (Springer International Publishing AG, Cham, Switzerland, 2018), pp. 25--42.
\newblock \doi{10.1007/978-3-319-70293-3_2}

\bibitem{Koetter2008}
R.~K\"{o}tter, F.R. Kschischang, Coding for errors and erasures in random network coding.
\newblock IEEE Transactions on Information Theory \textbf{54}(8), 3579--3591 (2008).
\newblock \doi{10.1109/TIT.2008.926449}

\bibitem{Lavrauw2015}
M.~Lavrauw, G.~Van~de Voorde, in \emph{Topics in Finite Fields},
  \emph{Contemporary Mathematics}, vol. 632, ed. by G.~Kyureghyan, G.L. Mullen, A.~Pott (American Mathematical Society, Providence, RI, 2015), pp. 271--294.
\newblock \doi{10.1090/conm/632/12633}

\bibitem{Lidl1986bk}
R.~Lidl, H.~Niederreiter, \emph{Introduction to Finite Fields and Their Applications}.
\newblock Cambridge University Press, New York, NY, 1986.

\bibitem{Manganiello2008}
F.~Manganiello, E.~Gorla, J.~Rosenthal, \emph{Spread codes and spread decoding in network coding}, in \emph{Proceedings of the 2008 IEEE International Symposium on Information Theory (ISIT 2008)} (IEEE, Toronto, Canada, 2008), pp. 881--885.
\newblock \doi{0.1109/ISIT.2008.4595113}

\bibitem{Manganiello2011}
F.~Manganiello, A.L. Trautmann, J.~Rosenthal, \emph{On conjugacy classes of subgroups of the general linear group and cyclic orbit codes}, in \emph{Proceedings of the 2011 IEEE International Symposium on Information Theory (ISIT 2011)} (IEEE, Saint Pettersburg, 2011), pp. 1916--1920.
\newblock \doi{10.1109/ISIT.2011.6033885}

\bibitem{Navarro-Perez2022}
M.A. {Navarro-P\'{e}rez}, X.~{Soler-Escriv\`{a}}, Flag codes of maximum distance and constructions using {Singer} groups.
\newblock Finite Fields and their Applications \textbf{80}, Article 102011 (2022).
\newblock \doi{10.1016/j.ffa.2022.102011}

\bibitem{Poroch2019}
M.H. Poroch, A.A. Talebi, Decoding of orbit codes.
\newblock TWMS Journal of Applied and Engineering Mathematics \textbf{9}(2), 225--236 (2019).
\newblock \doi{10.1137/0310020}

\bibitem{Rosenthal2013}
J.~Rosenthal, A.L. Trautmann, A complete characterization of irreducible cyclic orbit codes and their {Pl\"{u}cker} embedding.
\newblock Designs, Codes and Cryptography \textbf{66}, 275--289 (2013).
\newblock \doi{10.1007/s10623-012-9691-5}

\bibitem{Segre1964}
B.~Segre, Teoria di Galois, fibrazioni proiettive e geometrie non
  desarguesiane.
\newblock Annali di Matematica Pura ed Applicata \textbf{64}, 1--76 (1964).
\newblock \doi{10.1007/BF02410047}

\bibitem{TerraBastos2020}
G.~Terra~Bastos, R.~Palazzo~J\'{u}nior, M.~Guerreiro, Abelian non-cyclic orbit codes and multishot subspace codes.
\newblock Advances in Mathematics of Communications \textbf{14}(4), 631--650 (2020).
\newblock \doi{10.3934/amc.2020035}

\bibitem{Trautmann2014}
A.L. Trautmann, \emph{Message encoding for spread and orbit codes}, in \emph{Proceedings of the 2014 IEEE International Symposium on Information Theory (ISIT 2014)} (IEEE, Honolulu, Hawaii, 2014), pp. 2594--2598.
\newblock \doi{10.1109/ISIT.2014.6875303}

\bibitem{Trautmann2010}
A.L. Trautmann, F.~Manganiello, J.~Rosenthal, \emph{Orbit codes -- A new concept in the area of network coding}, in \emph{Proceedings of the 2010 IEEE Information Theory Workshop (ITW 2010)} (IEEE, Dublin, Ireland, 2010).
\newblock \doi{10.1109/CIG.2010.5592788}

\bibitem{Trautmann2013b}
A.L. Trautmann, F.~Manganiello, M.~Braun, J.~Rosenthal, Cyclic orbit codes.
\newblock IEEE Transactions on Information Theory \textbf{59}(11), 7386--7404 (2013).
\newblock \doi{10.1109/TIT.2013.2274266}

\end{thebibliography}

\end{document}